\newtheorem{theorem}{Theorem}
\newtheorem{lemma}{Lemma}
\providecommand{\keywords}[1]
{
  \textbf{\textit{Index Terms---}} #1
}
\begin{document}
\title{Age of Information in Multiple Sensing } 



 \author{%
   \IEEEauthorblockN{Alireza Javani,
                     Marwen Zorgui,
                     and Zhiying Wang
                     }
   \IEEEauthorblockA{
                     Center for Pervasive Communications and Computing\\
  					University of California, Irvine\\
                     \{ajavani, mzorgui, zhiying\}@uci.edu}

 }

\maketitle


\begin{abstract}
Having timely and fresh knowledge about the current state of information sources is critical in a variety of applications. In particular, a status update may arrive at the destination much later than its generation time due to processing and communication delays. The freshness of the status update at the destination is captured by the notion of age of information. In this study, we first analyze a network with a single source, $n$ servers, and the monitor (destination). The servers independently sense the source of information and send the status update to the monitor. We then extend our result to multiple independent sources of information in the presence of $n$ servers. We assume that updates arrive at the servers according to Poisson random processes. Each server sends its update to the monitor through a direct link, which is modeled as a queue. The service time to transmit an update is considered to be an exponential random variable. 
We examine both homogeneous and heterogeneous service and arrival rates for the single-source case, and only homogeneous arrival and service rates for the multiple sources case. 
We derive a closed-form expression for the average age of information under a last-come-first-serve (LCFS) queue for a single source and arbitrary $n$ homogeneous servers. For $n=2,3$, we derive the explicit average age of information for arbitrary sources and homogeneous servers, and for a single source and heterogeneous servers. For $n=2$ we find the optimal arrival rates given fixed sum arrival rate and service rates.
\end{abstract}

\keywords{\textbf{Age of information, wireless sensor network, status update, queuing analyses, monitoring network.}}
\section{Introduction}
Widespread sensor network applications such as health monitoring using wireless sensors \cite{amin2018robust} and the Internet of things (IoT)\cite{chandana2018weather}, as well as applications like stock market trading and vehicular networks \cite{du2015effective}, require sending several status updates to their designated recipients (called monitors). Outdated information in the monitoring facility may lead to undesired situations. As a result, having the data at the monitor as fresh as possible is crucial. 

In order to quantify the freshness of the received status update, the age of information(AoI) metric was introduced in~\cite{kaul2012real}. For an update received by the monitor, AoI is defined as the time elapsed since the generation of the update. AoI captures the timeliness of status updates, which is different from other standard communication metrics like delay and throughput. It is affected by the inter-arrival time of updates and the delay that is caused by queuing during update processing and transmission.

In this paper, we consider AoI in a multiple-server network. We assume that a number of shared sources are sensed and then the data is transmitted to the monitor by $n$ independent servers. For example, the sources of information can be some shared environmental parameters, and independently operated sensors in the surrounding area obtain such information.
For another example, the source of information can be the prices of several stocks which is transmitted to the user by multiple independent service providers. Throughout this paper, a sensor or a service provider is called a server, since it is responsible to serve this update to the monitor.
We assume that status updates arrive at the servers independently according to Poisson random processes, and the server is modeled as a queue whose service time for an update is exponentially distributed. 
We assume information sources are independent and are sensed by $n$ independent servers.

In \cite{kaul2012real}, authors considered the single-source single-server and first-come-first-serve (FCFS) queue model and determined the arrival rate that minimizes AoI. Different cases of multiple-source single-server under FCFS and last-come-first-serve (LCFS) were considered in \cite{yates2018age} and the region of feasible age was derived. In \cite{yates2018status, yates2018network}, the system is modeled as a source that submits status updates to a network of parallel and serial servers, respectively, for delivery to a monitor and AoI is evaluated. The parallel-server network is also studied in \cite{kam2016effect} when the number of servers is 2 or infinite, and the average AoI for FCFS queue model was derived.

Authors in \cite{kadota2016minimizing} formulated a discrete-time decision problem in order to find a scheduling policy for minimizing the expected weighted sum of AoI. A multi-source multi-hop setting in broadcast wireless networks was investigated in \cite{farazi2018age} and a fundamental lower bound on  the average AoI was derived.
Different scheduling policies with throughput constraints were considered in \cite{kadota2018optimizing} to minimize AoI. Another age-related metric of peak AoI was introduced in \cite{costa2016age}, which corresponds to the age of information at the monitor right before the receipt of the next update. 
The average peak AoI minimization in IoT networks and wireless systems was considered in \cite{abd2018average, he2016optimal}. The problem of minimizing the average age in energy harvesting sources by manipulating the update generation process was studied in \cite{wu2018optimal, feng2018minimizing}. Maximizing energy efficiency of wireless sensor networks that include constraints on AoI is investigated in \cite{valehi2017maximizing}. 

In this paper, we study the average age of information as in~\cite{kaul2012real}. We mainly consider LCFS with preemption in service  (in short, LCFS) queue model, namely, upon the arrival of a new update, the server immediately starts to serve it and drops any old update being served. We derive a closed-form formula of the average AoI for LCFS and a single source. For multiple sources, AoI formula is derived for arbitrary number of sources and $n=2,3$ servers. In addition, the heterogeneous network with a single source is considered. To obtain the AoI, we use the stochastic hybrid system (SHS) analysis similar to \cite{yates2018status, yates2018age}.  

This paper is organized as follows. Section \ref{sec:preliminaries} formally introduces the system model of interest, and provides preliminaries on SHS. In subsection \ref{LCFS}, we derive the average age of information formula by applying SHS method to our model when we have a signle information source and the network is homogeneous. In subsection \ref{multiple}  we derive AoI for arbitrary number of information sources when $n=2,3$. In section \ref{hetro-sec}, we investigate the heterogeneous network when we have a single source and $n=2,3$ and find the optimal arrival rate at each server when $n=2$. At the end, the conclusion follows in section \ref{conc}.
\section{System Model and Preliminaries}
\label{sec:preliminaries}
Notation: in this paper, we use boldface for vectors, and normal font with a subscript for its elements. For example, for a vector $\mathbf{x}$, the $j$-th element is denoted by $x_j$. For non-negative integers $a$ and $b \geq a$, we define $[a:b] \triangleq  \{a,  \ldots, b\}$, $[a]\triangleq  [1:a]$. If $a > b$, $[a:b] = \emptyset$.

In this section, we first present our network model, and then briefly review the stochastic hybrid system analysis from \cite{yates2018age}.
The network consists of $m$ information sources that are sensed by $n$ independent servers as illustrated in Figure \ref{fig2}. Updates after going through separate links are aggregated at the monitor side. The interest of this paper is the average AoI at the monitor. Server $j$ collects updates of source $i$ following a Poisson random process with rate $\lambda_{j}^{(i)}$ and the service time is an exponential random variable with average $\frac{1}{\mu_{j}}$, independent of all other servers, $j \in [n], i \in [m]$. A network is called homogeneous if $\lambda_{j}^{(i)}=\lambda^{({i})}, \mu_j=\mu$, for all $j \in [n], i \in [m]$, otherwise, it is heterogeneous. In case of a single source in a homogeneous network, we denote $\lambda^{({1})}$ simply by $\lambda$. 

Consider a particular source. Suppose the freshest update at the monitor at time $t$ is generated at time $u(t)$, the \emph{age of information} at the monitor (in short, AoI) is defined as $\Delta(t) =t-u(t)$, which is the time elapsed since the generation of the last received update. 
From the definition, it is clear that AoI linearly increases at a unit rate with respect to $t$, except some reset jumps to a lower value at points when the monitor receives a fresher update from the source. The age of information of our network is shown in Figure \ref{fig1}.
Let $t_{1}, t_{2},\dots, t_N$ be the generation time of all updates at all servers in increasing order. 
The black dashed lines show the age of every update.
Let $T_{1}, T_{2},\dots, T_N$ be the receipt time of all updates. The red solid lines show AoI.

We note a key difference between the model in this work and most previous models. Updates come from different servers, therefore they might be out of order at the monitor and thus a new arrived update might not  have any effect on AoI because a fresher update is already delivered. As an example, from the $6$ updates shown in Figure \ref{fig1}, \emph{useful} updates that change AoI are updates $1,3,4$ and $6$, while the rest are disregarded as their information when arrived at the monitor is obsolete. 
Thus among all the received updates for AoI analyses, we only need to consider the \emph{useful} ones that lead to a change in AoI. 
\begin{figure}
\centering
\includegraphics[width=0.42\textwidth]{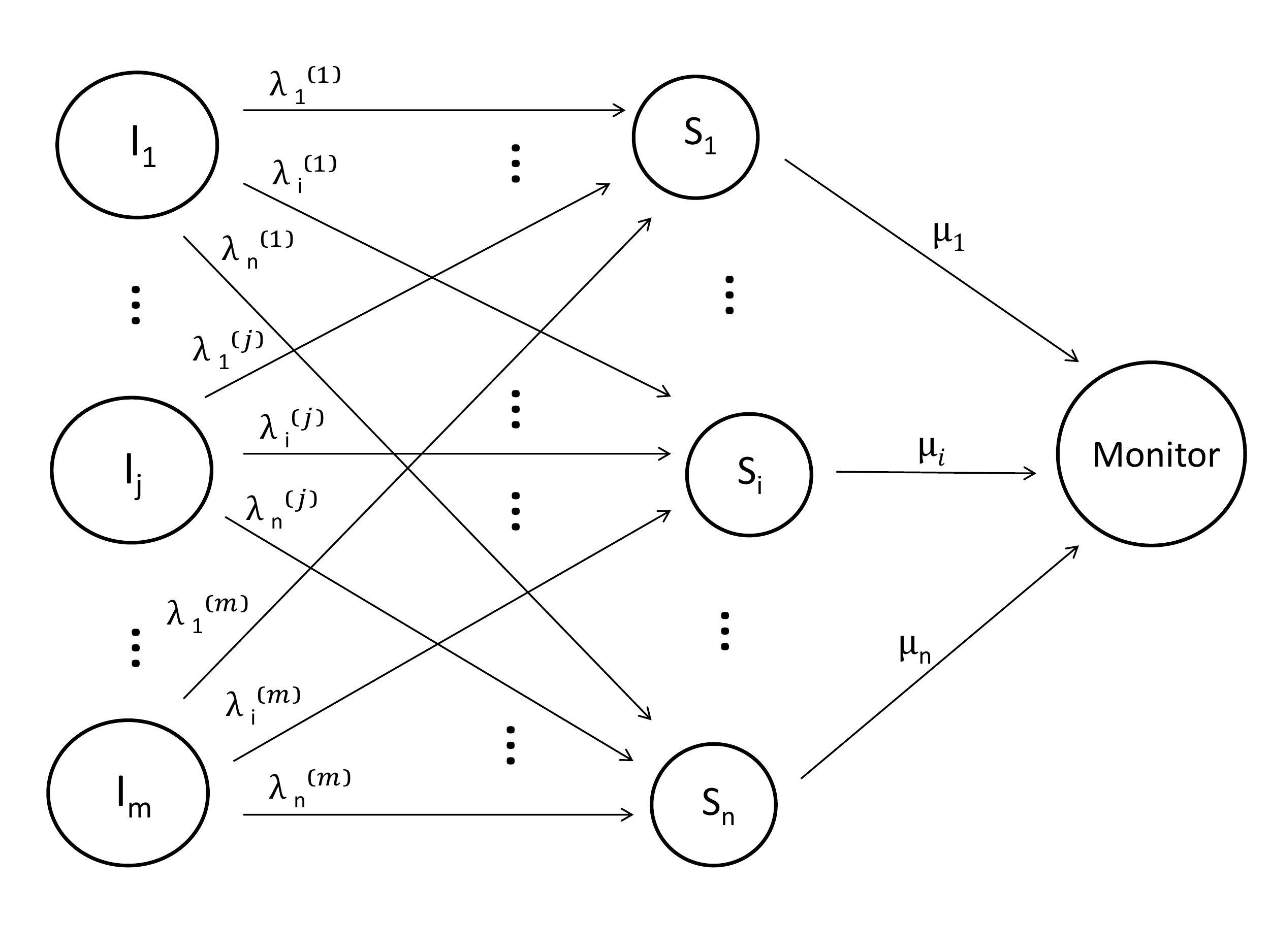} 
\caption{The $n$-server monitoring network with $S_{1},S_{2},...,S_{n}$ being the servers and $I_{1},I_{2},...,I_m$ being the independent information sources, sending the updates from the sources to the monitor.}

\label{fig2}
\end{figure}

\begin{figure}
\centering
\includegraphics[width=0.4\textwidth]{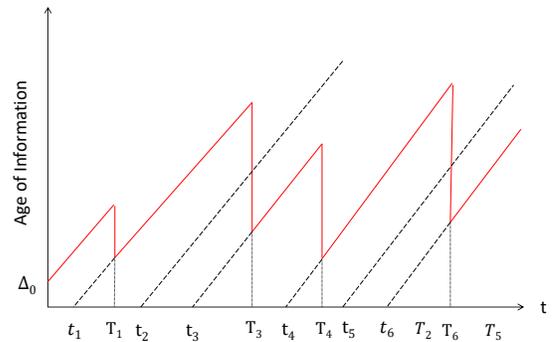} 
\caption{AoI for a network with $n$ servers.}
\label{fig1}
\end{figure}

The average AoI is the limit of the average age over time $\Delta \triangleq \lim_{T \to \infty}   \int_0^T \Delta(t) \big/ T$, and for a stationary ergodic system, it is also the limit of the average age over the ensemble $\Delta= \lim_{t \to \infty} \mathbb{E}[\Delta(t)]$.

In the paper, we view our system as a stochastic hybrid system (SHS) and apply a method first introduced in \cite{yates2018age} in order to calculate AoI. 
We can thus obtain the average AoI under LCFS with preemption in service, or in short, LCFS.

In SHS, the state is composed of a discrete state and a continuous state. 
The discrete state $q(t) \in \mathcal{Q}$, for a discrete set $\mathcal{Q}$, is a continuous-time discrete Markov chain (e.g., to represent the number of idle servers in the network), and the continuous-time continuous state $\mathbf{x}(t) = [x_0(t),x_1(t),\dots,x_n(t)] \in \mathbb{R}^{n+1}$ is the stochastic process for AoI. We use $x_0(t)$ to represent the age at the monitor, and $x_j(t)$ for the age at the $j$-th server, $j=1,2,\dots,n$. Graphically, we represent each state $q \in \mathcal{Q}$ by a node. For the discrete Markov chain $q(t)$, transitions happen from one state to another through directed transition edge $l$, and the time spent before the transition occurs is exponentially distributed with rate $\lambda{(l)}$. Note that it is possible to transit from the same state to itself. The transition occurs when an update arrives at a server, or an update is received at the monitor. Thus the transition rate is the update arrival rate or the service rate  $\lambda{(l)} \in [\lambda_{1}^{(1)},...,\lambda_{n}^{(m)}, \mu_{1},...,\mu_{n}]$. 
Denoted by $L'_{{q}}$ and $L_{{q}}$ the sets of incoming and outgoing transitions of state $q$, respectively.
When transition $l$ occurs, we write that the discrete state transits from $q_l$ to $q_l^\prime$. For instance, if we have $2$ states and considering the transition $l$ from state $1$ to state $2$, we have $q_{l}=1$ and $q_{l}^{\prime} = 2$ which shows that state $2$ is an outgoing transition for state $1$ and state $1$ is an incoming transition for state $2$. For a transition, we denote that
the continuous state changes from $\mathbf{x}$ to $\mathbf{x}^\prime$.
In our problem, this transition is linear in the vector space of $\mathbb{R}^{n+1}$, i.e., $\mathbf{x}^\prime=\mathbf{x}A_{l}$, for some real matrix $A_{l}$ of size $(n+1) \times (n+1)$.  
Note that when we have no transition, the age grows at a unit rate for the monitor and relevant servers, and is kept unchanged for irrelevant servers.
Hence, within the discrete state $q$, $\mathbf{x}(t)$ evolves as a piece-wise linear function in time, namely, $\frac{\partial{\mathbf{x}(t)}}{\partial{t}} = \mathbf{b}_{q}$,  for some $ \mathbf{b}_q \in \{0,1\}^{n+1}$. In other words,  the age grows at a unit rate for the monitor and relevant servers; and the age is kept unchanged for irrelevant servers.
For our purpose, we consider the discrete state probability
\begin{equation}
\pi_{\hat{q}} (t) \triangleq \mathop{\mathbb{E}}[\delta_{\hat{q},q(t)}] =P[q(t)=\hat{q}],
\end{equation}
and the correlation between the continuous state $\mathbf{x}(t)$ and the discrete state $q(t)$:
\begin{equation}
\mathbf{v}_{\hat{q}} = [v_{\hat{q}0} (t),\dots, v_{\hat{q}n} (t)] \triangleq \mathop{\mathbb{E}}[\mathbf{x}(t) \delta_{\hat{q},q(t)}].
\end{equation}
Here $\delta_{\cdot,\cdot}$ denotes the Kronecker delta function. When the discrete state $q(t)$ is ergodic, $\mathbf{\pi}_q(t)$ converges uniquely to the stationary probability ${\mathbf{\pi}}_q$, for all $q \in \mathcal{Q}$. We can find these stationary probabilities from the following set of equations knowing that $\sum_{q \in \mathcal{Q}}^{}\pi_{q} = 1$,
\begin{align*}
{\mathbf{\pi}}_{{q}} \sum_{l \in L_{{q}}}^{} \lambda{(l)}=
 \sum_{l \in L^\prime_{{q}}}^{}
\lambda{(l)} {\mathbf{\pi}}_{q_l}. \quad {q} \in \mathcal{Q}
\end{align*}

A key lemma we use to develop AoI for our LCFS queue model is the following from \cite{yates2018age}, which was derived from the general SHS results in \cite{hespanha2006modelling}.

\begin{lemma} \label{lem:yates}
\cite{yates2018age}
 If the discrete-state Markov chain $q(t)$ is ergodic with stationary distribution ${\pi}$ and we can find a non-negative solution of $ \{{\mathbf{v}}_{{q}}, {q} \in \mathcal{Q} \}$ such that 
\begin{equation}\label{eq:yateslemma}
{\mathbf{v}}_{{q}} \sum_{l \in L_{{q}}}^{} \lambda{(l)}=
\mathbf{b}_{{q}} {\pi}_{{q}} + \sum_{l \in L^\prime_{{q}}}^{}
\lambda{(l)} {\mathbf{v}}_{q_l} A_{l}, \quad {q} \in \mathcal{Q},
\end{equation}
then the average age of information is given by 
\begin{align}
\Delta= \sum_{{q} \in \mathcal{Q}}^{} {v}_{{q} {0}}.
\end{align} 
\end{lemma}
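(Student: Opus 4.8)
The plan is to derive the stated fixed-point equation and age formula directly from the dynamics of the stochastic hybrid system, using the extended generator (Dynkin-type) machinery of~\cite{hespanha2006modelling}. The state is the pair $(q(t), \mathbf{x}(t))$; within a fixed discrete state $q$ the continuous part drifts as $\frac{\partial \mathbf{x}}{\partial t} = \mathbf{b}_q$, and at each transition $l$ the discrete state jumps $q_l \to q_l'$ while the continuous part is reset linearly by $\mathbf{x}' = \mathbf{x} A_l$. For a test function $\psi(q, \mathbf{x})$, the extended generator acts as
\[
(L\psi)(q, \mathbf{x}) = \frac{\partial \psi}{\partial \mathbf{x}}(q, \mathbf{x}) \, \mathbf{b}_q + \sum_{l \in L_q} \lambda(l) \big[ \psi(q_l', \mathbf{x} A_l) - \psi(q, \mathbf{x}) \big],
\]
and the Dynkin identity $\frac{d}{dt}\mathbb{E}[\psi(q(t), \mathbf{x}(t))] = \mathbb{E}[(L\psi)(q(t), \mathbf{x}(t))]$ holds. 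The whole argument reduces to feeding two families of test functions into this identity.

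First I would recover the Markov-chain balance equations by taking $\psi(q, \mathbf{x}) = \delta_{\hat{q}, q}$; here the drift term vanishes and the jump term collapses to the global-balance relation for $\pi_{\hat{q}}(t)$, whose stationary limit is the distribution $\pi$ assumed ergodic in the hypothesis. Next, and this is the core computation, I would take the coordinate test functions $\psi(q, \mathbf{x}) = \delta_{\hat{q}, q}\, x_j$, so that $\mathbb{E}[\psi] = v_{\hat{q}j}(t)$. The drift term then contributes $(\mathbf{b}_{\hat{q}})_j \, \pi_{\hat{q}}(t)$; the jump term splits into an outgoing part $-\,v_{\hat{q}j}(t) \sum_{l \in L_{\hat{q}}} \lambda(l)$ and an incoming part in which each edge $l \in L'_{\hat{q}}$ evaluates $\mathbf{x} A_l$ against the \emph{pre-jump} correlation with the source state $q_l$, yielding $\sum_{l \in L'_{\hat{q}}} \lambda(l)\, (\mathbf{v}_{q_l}(t) A_l)_j$. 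Assembling the $n+1$ coordinates into a vector gives the linear system
\[
\dot{\mathbf{v}}_{\hat{q}}(t) = \mathbf{b}_{\hat{q}} \pi_{\hat{q}}(t) + \sum_{l \in L'_{\hat{q}}} \lambda(l)\, \mathbf{v}_{q_l}(t) A_l - \mathbf{v}_{\hat{q}}(t) \sum_{l \in L_{\hat{q}}} \lambda(l),
\]
and setting the derivative to zero at stationarity reproduces~\eqref{eq:yateslemma} exactly. It then remains to connect the solution to the age: since $\mathbb{E}[x_0(t)] = \sum_{q \in \mathcal{Q}} \mathbb{E}[x_0(t)\delta_{q, q(t)}] = \sum_{q \in \mathcal{Q}} v_{q0}(t)$, passing to the stationary limit yields $\Delta = \lim_{t \to \infty} \mathbb{E}[x_0(t)] = \sum_{q \in \mathcal{Q}} v_{q0}$, the claimed formula.

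The hard part will be the technical justification rather than the algebra. Dynkin's formula requires the first moments $\mathbf{v}_q(t)$ to stay finite and to converge, so I would invoke the ergodicity of $q(t)$ together with a moment-boundedness or uniform-integrability argument to guarantee that $\dot{\mathbf{v}}_q(t) \to 0$ and that the stationary correlations genuinely exist. A related subtlety is that the linear system~\eqref{eq:yateslemma} need not be uniquely solvable in general; the role of the \emph{non-negativity} hypothesis is precisely to single out the physically meaningful solution, namely the one coinciding with the true limit $\lim_{t} \mathbf{v}_q(t) = \mathbb{E}[\mathbf{x}\,\delta_{q,q}]$, which is automatically non-negative because ages and indicators are non-negative. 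Establishing that this non-negative solution is the correct (and unique) one, and hence that its zeroth coordinates sum to the actual average age, is the delicate step; everything else follows mechanically from the generator computation above.
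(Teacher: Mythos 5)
The paper does not actually prove this lemma: it is imported verbatim from \cite{yates2018age} with a pointer to the general SHS machinery of \cite{hespanha2006modelling}, so there is no in-paper proof to compare against. Your derivation is, in substance, the proof given in that cited reference: applying the extended generator and Dynkin's identity to the test functions $\delta_{\hat{q},q}$ and $\delta_{\hat{q},q}\,x_j$ to obtain the ODEs for $\pi_{\hat{q}}(t)$ and $\mathbf{v}_{\hat{q}}(t)$, setting $\dot{\mathbf{v}}_{\hat{q}}=0$ at stationarity to recover \eqref{eq:yateslemma}, and summing the zeroth coordinates over $q$ to get $\Delta=\lim_{t\to\infty}\mathbb{E}[x_0(t)]=\sum_{q}v_{q0}$. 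The generator computation is correct, including the split of the jump term into the outgoing rate $\sum_{l\in L_{\hat{q}}}\lambda(l)$ acting on $\mathbf{v}_{\hat{q}}$ and the incoming terms $\lambda(l)\,\mathbf{v}_{q_l}A_l$ evaluated against the pre-jump correlations. The one place where your argument is a sketch rather than a proof is exactly where you say it is: justifying that the first moments remain finite and converge (so that $\dot{\mathbf{v}}_q\to 0$), and that the non-negativity hypothesis certifies that the solution of the linear system is the true stationary correlation vector rather than a spurious one. Those steps are the genuine technical content of the lemma and are handled in the cited works; as written, your proposal asserts rather than establishes them. For the purposes of this paper, which treats the lemma as a black box, your outline is a faithful and essentially correct reconstruction.
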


\section{AoI in Homogeneous Networks}
\subsection{Single Source Multiple Sensors}
\label{LCFS}
In this section, we present AoI calculation with the LCFS queue for the single-source $n$-server homogeneous network. In this network, upon arrival of a new update, each server immediately drops any previous update in service and starts to serve the new update.
Note that to compute the average AoI, Lemma \ref{lem:yates} requires solving $|\mathcal{Q}|(n+1)$ linear equations of $ \{{\mathbf{v}}_{{q}}, {q} \in \mathcal{Q} \}$. To obtain explicit solutions for these equations, the complexity grows with the number of discrete states. Since the discrete state typically represents the number of idle servers in the system for homogeneous servers, $|\mathcal{Q}|$ should be $n+1$. In the following, we introduce a method inspired by \cite{yates2018status} to reduce the number of discrete states and efficiently describe the transitions.

We define our continuous state $\mathbf{x}$ at a time as follows:
the first element of $\mathbf{x}$ is AoI at the monitor ($x_{0}$), the second is always the freshest update among all updates in the servers, the third is always the second freshest update in the servers, etc. With this definition we always have $x_{1} \leq x_{2} \leq .... \leq x_{n}$, for any time. Note that the index $i$ of $x_i$ does not represent a physical server index, but the $i$-th smallest age of information among the $n$ servers. The physical server index for $x_i$ changes with each transition. We say that the server corresponding to $x_i$ is the $i$-th \emph{virtual} server. 

A transition $l$ is triggered by (i) the arrival of an update at a server, or (ii) the delivery of an update to the monitor. Recall that we use $\mathbf{x}$ and $\mathbf{x}'$ to denote AoI continuous state vector right before and after the transition $l$.

When one update arrives at the monitor and the server for that update becomes idle, we put a \emph{fake update} to the server using the method introduced in \cite{yates2018status}. 
Thus we can reduce the calculation complexities and only have one discrete state indicating that all servers are virtually busy. We denote this state by $q=0$.
In particular, we put the current update that is in the monitor to an idle server until the next update reaches this server. This assumption does not affect our final calculation for AoI, because even if the fake update is delivered to the monitor, AoI at the monitor does not change.

When an update is delivered to the monitor from the $k$-th virtual server, the server becomes idle and as previously stated, receives the fake update. The age at the monitor becomes $x'_0=x_k$, and the age at the $k$-th server becomes $x'_k = x'_0=x_k$.
In this scenario, consider the update at the $j$-th virtual server, for $j>k$.  Its delivery to the monitor does not affect AoI since  it is older than the current update of the monitor, i.e., $x_j \ge x_k = x'_0$. 
Hence, we can adopt a \emph{fake preemption} where the update for the $j$-th virtual server, for all $k \le j \le n$, is preempted and replaced with the fake current update at the monitor.
Physically, these updates are not preempted and as a benefit, the servers do not need to cooperate and can work in a distributed manner. 

\begin{figure}
\centering
\includegraphics[width=0.2\textwidth]{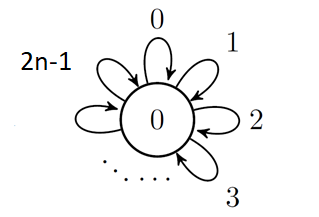} 
    \caption{SHS for our model with $n$ servers.}
   
  \label{fig3}  
\end{figure}

\begin{table}
\centering
\begin{tabular}{ cccccccccc }
 
 $l$ & $\lambda{(l)}$ & $\mathbf{x}^\prime$ =$\mathbf{x}A_{l}$ \\          \hline
 $0$ & $\lambda$ & $[x_{0},0,x_{2},x_{3},x_{4},...,x_{n}]$ \\ \hline
 $1$ & $\lambda$ & $[x_{0},0,x_{1},x_{3},x_{4},...,x_{n}]$ \\ \hline
 $2$ & $\lambda$ & $[x_{0},0,x_{1},x_{2},x_{4},...,x_{n}]$ \\ \hline
 
  & $\vdots$ & $\vdots$ \\  \hline
 $n-1$ & $\lambda$ & $[x_{0},0,x_{1},x_{2},x_{3},..,x_{n-1}]$ \\ \hline
 $n$ & $\mu$ & $[x_{1},x_{1},x_{1},x_{1},...,x_{1}]$ \\  \hline
 $n+1$ & $\mu$ & $[x_{2},x_{1},x_{2},x_{2},...,x_{2}]$ \\  \hline
 $n+2$ & $\mu$ & $[x_{3},x_{1},x_{2},x_{3},...,x_{3}]$ \\  \hline
  & $\vdots$ & $\vdots$ \\  \hline
 $2n-1$ & $\mu$ & $[x_{n},x_{1},x_{2},x_{3},...,x_{n}]$ \\  
 
\end{tabular}
\caption{Table of transformation for the Markov chain in Figure~\ref{fig3}.}
\label{table:table 1}
\end{table}
By utilizing virtual servers, fake update, and fake preemption, we reduce SHS to a single discrete state with linear transition $A_l$.
We illustrate our SHS with discrete state space of $Q=\{0\}$ in Figure \ref{fig3}. The stationary distribution ${\pi}_{0}$ is trivial and ${\pi}_{0}=1$. We set $\mathbf{b}_{q}=[1,...,1]$ which indicates that the age at the monitor and the age of each update in the system grows at a unit rate. The transitions are labeled $l  \in \{0,1,...,2n-1\}$ and for each transition $l$ we list the transition rate and the transition mapping in Table \ref{table:table 1}. For simplicity, we drop the index $q=0$ in the vector $\mathbf{v}_0$, and write it as $\mathbf{v}=[v_0,v_1,\dots,v_n]$. Because we have one state, $\mathbf{x}A_{l}$ and $\mathbf{v}A_{l}$ are in correspondence. Next, we describe the transitions in Table \ref{table:table 1}. 

{\bf Case I.} $l=0,1,..,n-1:$ When a fresh update arrives at virtual server $l+1$, the age at the monitor remains the same and $x_{l+1}$ becomes zero. This server has the smallest age, so we take this zero and reassign it to the first virtual server, namely, $x^\prime_{1}=0$. In fact virtual servers $1,2,\dots,l+1$ all get reassigned virtual server numbers. Specifically, after transition $l$, virtual server $l+1$ becomes virtual server $1$, and virtual server $1$ becomes virtual server $2$,..., virtual server $l$ becomes virtual server $l+1$. The transition rate is the arrival rate of the update, $\lambda$. The matrix $A_l$ is

\vspace{-0.35cm}
\begin{footnotesize}
\begin{align}
    \bordermatrix{
~	&	0	&	1	&	2	&	\dots	&	l+1	&	l+2	&	\dots	&	n	\cr
0	&	1	&		&		&		&		&		&		&		\cr
1	&		&	0	&	1	&		&		&		&		&		\cr
\vdots	&		&		&		&	\ddots	&		&		&		&		\cr
l	&		&		&		&		&	1	&		&		&		\cr
l+1	&		&		&		&		&		&	0	&		&		\cr
l+2	&		&		&		&		&		&	1	&		&		\cr
\vdots	&		&		&		&		&		&		&	\ddots	&		\cr
n	&		&		&		&		&		&		&		&	1	\cr
    }.
\end{align}
\end{footnotesize}

\vspace{-0.35cm}
\noindent{\bf Case II.} $l=n,n+1,..,2n-1:$ When an update is received at the monitor from virtual server $l+1-n$, the age at the monitor changes to $x_{l+1-n}$ and this server becomes idle. Using fake updates and fake preemption we assign $x'_{j}=x_{l+1-n}$, for all $l+1-n \le j \le n$. The transition rate is the service rate of a server, $\mu$. The matrix $A_l$ is

\vspace{-0.25cm}
\begin{footnotesize}
\begin{align}
\bordermatrix{
~	&	0	&	1	&			\dots	&	l-n	&	l+1-n	&	\dots	&	n	\cr
0	&	0	&		&				&		&		&		&		\cr
1	&		&	1	&				&		&		&		&		\cr
\vdots	&		&		&			\ddots	&		&		&		&		\cr
l-n	&		&		&				&	1	&		&		&		\cr
l+1-n	&	1	&	0	&			\dots	&	0	&	1	&	\dots	&	1	\cr
l+2-n	&	0	&		&			\dots	&	\dots	&	\dots	&	\dots	&	0	\cr
\vdots	&	\vdots	&		&				&		&		&		&	\vdots	\cr
n	&	0	&		&			\dots	&	\dots	&	\dots	&	\dots	&	0	\cr
}.
\end{align} 
\end{footnotesize}

\vspace{-0.25cm}
\noindent	Below we state our main theorem on the average AoI for the single-source $n$-server network.

\begin{theorem} \label{theory}
The age of information at the monitor for homogeneous single-source $n$-server network where each server has a LCFS queue is:

\vspace{-.3cm}
\begin{small}
\begin{align}
\label{AoI_single_source}
&AoI 
 =
  \frac{1}{\mu} \left[ 
\frac{1}{n \rho} \sum\limits_{j = 1}^{n-1}  \prod\limits_{i=1}^{j } \frac{\rho (n-i+1)}{i    + (n-i) \rho}+ 
 \frac{1}{n \rho}
+
\frac{1}{n^2  } \prod\limits_{i=1}^{n-1} \frac{\rho (n-i+1)}{i    + (n-i) \rho} 
\right],
\end{align}
\end{small}

\vspace{-.3cm}
\noindent where $\rho = \frac{\lambda}{\mu}$.
\end{theorem}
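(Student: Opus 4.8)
The plan is to instantiate Lemma~\ref{lem:yates} for the single-state SHS of Figure~\ref{fig3}. Since $\mathcal{Q}=\{0\}$ and $\pi_0=1$, every transition in Table~\ref{table:table 1} is a self-loop, the total outgoing rate is $\sum_l \lambda(l)=n\lambda+n\mu$, and \eqref{eq:yateslemma} collapses to the single vector identity
\[
(n\lambda+n\mu)\,\mathbf{v}=\mathbf{1}+\lambda\sum_{l=0}^{n-1}\mathbf{v}A_l+\mu\sum_{l=n}^{2n-1}\mathbf{v}A_l,
\]
with the target being $\Delta=v_0$. First I would read off the action of each $A_l$ on the components of $\mathbf{v}$ directly from the columns of Table~\ref{table:table 1}, since $\mathbf{v}A_l$ obeys exactly the same reshuffling rule as $\mathbf{x}A_l$. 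Collecting the $0$-th component, the Case~I transitions all leave $x_0$ fixed while the Case~II transition from virtual server $k$ sets $x_0'=x_k$, giving $(n\lambda+n\mu)v_0=1+n\lambda v_0+\mu\sum_{k=1}^{n}v_k$, i.e.
\[
\Delta=v_0=\frac{1}{n\mu}\Bigl(1+\mu\textstyle\sum_{k=1}^{n}v_k\Bigr).
\]
So the whole problem reduces to evaluating $\sum_{k=1}^n v_k$.

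Next I would extract the equations for the virtual-server components. Writing $w_j=\mu v_j$ and $\rho=\lambda/\mu$, a bookkeeping of how many Case~I and Case~II transitions send component $j$ to $w_{j-1}$, $w_j$, or $\sum_{k\le j}w_k$ yields $w_1=\tfrac{1}{n\rho}$ (matching the middle term of \eqref{AoI_single_source}) and, for $2\le j\le n$,
\[
\bigl[\rho(n-j+1)+(j-1)\bigr]w_j=1+\rho(n-j+1)\,w_{j-1}+\sum_{k=1}^{j-1}w_k .
\]
The key observation I would exploit is that, with $d_i=\rho(n-i+1)$ and $c_i=\rho(n-i+1)+(i-1)$, the ratio appearing in the theorem satisfies $\frac{\rho(n-i+1)}{i+(n-i)\rho}=\frac{d_i}{c_{i+1}}$, so the product in \eqref{AoI_single_source} is exactly $P_j=\prod_{i=1}^{j}\frac{d_i}{c_{i+1}}$. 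The homogeneous part of the recursion, $c_jw_j=d_jw_{j-1}$, has solution $\prod_{k=2}^{j}\frac{d_k}{c_k}$, which differs from $P_{j-1}$ only by the telescoping factor $d_j/d_1$; I would solve the full recursion by reduction of order against this homogeneous solution (equivalently, variation of parameters), then sum the resulting $w_j$ to obtain $\sum_{k=1}^{n}w_k=\frac1\rho\sum_{j=0}^{n-1}P_j+\frac1n P_{n-1}-1$. Substituting into the boxed expression for $v_0$ and regrouping the $j=0$ term of the sum as the isolated $\tfrac{1}{n\rho}$ gives precisely \eqref{AoI_single_source}; as a consistency check, $n=1$ returns the known LCFS value $\Delta=\frac1\mu(1+\tfrac1\rho)$, and $n=2$ matches the explicit formula computed directly.

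The hard part is the recursion itself: the term $\sum_{k=1}^{j-1}w_k$ couples $w_j$ to \emph{all} earlier unknowns, so it is not a clean first-order recurrence. Differencing two consecutive equations removes the sum but raises the order, turning it into a variable-coefficient second-order recursion $c_{j+1}w_{j+1}=[\,\cdots\,]w_j-d_j w_{j-1}$; the genuine obstacle is to solve this in closed form and, crucially, to carry out the final summation so that the accumulation of terms telescopes into the product $P_j$ of \eqref{AoI_single_source} rather than an unwieldy expression. I expect the cleanest route is to track the partial sums $T_j=1+\sum_{k\le j}w_k$, conjecture their closed form from the $n=2,3$ instances, and then verify that this form satisfies the recursion and the monitor equation by induction on $j$; the induction step is where essentially all the algebraic effort will concentrate.
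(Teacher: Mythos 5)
Your framework is exactly the paper's: the single\mbox{-}state SHS, the collapse of \eqref{eq:yateslemma} to one vector identity, the monitor equation $v_0=\frac{1}{n\mu}(1+\mu\sum_k v_k)$, the value $v_1=\frac{1}{n\lambda}$, and the per-coordinate recursion are all correct and match \eqref{eq3}--\eqref{w2}. The gap is that you stop at the one step that actually carries the proof. You difference consecutive equations, correctly note that this removes the coupling sum, and then declare the result a ``variable-coefficient second-order recursion'' whose closed-form solution is ``the genuine obstacle,'' to be handled by variation of parameters or by conjecturing $T_j$ from $n=2,3$ and inducting. But if you compute the bracket you left as $[\cdots]$, you find it equals $c_{j+1}+d_j$: in your notation the differenced equation is
\begin{equation*}
c_{j+1}w_{j+1}=(c_{j+1}+d_j)\,w_j-d_j\,w_{j-1},
\quad\text{i.e.}\quad
c_{j+1}(w_{j+1}-w_j)=d_j\,(w_j-w_{j-1}),
\end{equation*}
so the recursion is genuinely \emph{first order in the successive differences} and telescopes immediately to $w_{j+1}-w_j=\frac{1}{n\rho}\prod_{i=1}^{j}\frac{d_i}{c_{i+1}}=\frac{1}{n\rho}P_j$ (this is precisely the paper's $w_{i+1}=\frac{\lambda(n-i+1)}{i\mu+(n-i)\lambda}w_i$ with $w_2=\frac{1}{(n-1)\lambda+\mu}$). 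No second-order machinery, reduction of order, or conjecture-and-verify induction is needed; without exhibiting this factorization your proof is not complete, since everything downstream of it is deferred.

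A secondary point: your plan then ``sums the resulting $w_j$'' to get $\sum_k w_k$ (your stated closed form for that sum is correct, as I checked it is consistent with \eqref{AoI_single_source}), but summing $v_k=v_1+\sum_{j\le k}(v_j-v_{j-1})$ produces a weighted double sum $\sum_j(n-j+1)(v_j-v_{j-1})$ that needs a further telescoping identity. The paper sidesteps this entirely by using the $i=n$ instance of \eqref{w2} to express $\mu\sum_{j=1}^{n-1}v_j$ through $v_n$ and $v_{n-1}$ alone, which gives $v_0=v_n+\frac{\lambda}{n\mu}(v_n-v_{n-1})$ and reduces the final answer to the single telescoped quantity $v_n=v_1+\sum_{j=2}^n(v_j-v_{j-1})$. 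Your route can be made to work, but you should either adopt this shortcut or actually carry out the weighted summation; as written, both the solution of the recursion and the final summation are asserted rather than proved.
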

\begin{proof}
Recall that $\mathbf{v}$ denotes the vector  $\mathbf{v}_0$ for the single state $q=0$.
By Lemma \ref{lem:yates} and the fact that there is only one state,
we need to calculate the vector $\mathbf{v}$ as a solution to \eqref{eq:yateslemma}, and the $0$-th coordinate $v_{0}$ is AoI at the monitor. As we mentioned $\mathbf{v}A_{l}$ is in correspondence with $\mathbf{x}A_{l}$, so we have:

\vspace{-.4cm}
\begin{small}
\begin{align}
 (n\lambda+n\mu){\mathbf{v}}= & \quad \quad [1,1,1,1,1,1,1,...,1] \nonumber \\
   &+\lambda[v_{0},0,v_{2},v_{3},v_{4},...,v_{n}]  \nonumber \\ 
   &+\lambda[v_{0},0,v_{1},v_{3},v_{4},...,v_{n}]  \nonumber \\
   &+\lambda[v_{0},0,v_{1},v_{2},v_{4},...,v_{n}]  \nonumber \\
   & \quad \quad \quad \quad \vdots \quad \quad \quad \quad    \quad \vdots  \nonumber \\
   &+\lambda[v_{0},0,v_{1},v_{2},v_{3},...,v_{n-1}]  \nonumber \\
   &+\mu[v_{1},v_{1},v_{1},v_{1},v_{1},...,v_{1}]   \nonumber \\
   &+\mu[v_{2},v_{1},v_{2},v_{2},v_{2},...,v_{2}]   \nonumber \\
   &+\mu[v_{3},v_{1},v_{2},v_{3},v_{3},...,v_{3}]   \nonumber \\
   & \quad \quad \quad \quad \vdots \quad \quad \quad \quad    \quad \vdots  \nonumber \\
   &+\mu[v_{n},v_{1},v_{2},v_{3},...,v_{n-1},v_{n}]   \label{eq:main}.
\end{align}  
\end{small}

\vspace{-.3cm}
\noindent From the $0$-th coordinate of \eqref{eq:main}, we have $(n\lambda+n\mu)v_{0}= 1+n\lambda v_{0} + \mu \sum_{j=1}^{n} v_{j}$, implying
\begin{align}
  v_{0}= \frac{1}{n\mu} + \frac{\sum_{j=1}^{n} v_{j}}{n} \label{eq3}.
\end{align}
From the $1$-st coordinate of \eqref{eq:main}, it follows  that 
$v_{1}~=~\frac{1}{n\lambda}$.
Then, to calculate $v_{0}$, we have to calculate $v_{i}$ for $i \in \{2,...,n\}$. From the $i$-th coordinate of \eqref{eq:main},

\vspace{-0.35cm}
\begin{small}
\begin{align}
((n-i+1)\lambda+(i-1)\mu) v_{i} =   1+ \mu \sum_{j=1}^{i-1} v_{j} +\lambda (n-i+1) v_{i-1}.
\label{w2}
\end{align}
\end{small}

\vspace{-0.4cm}
\noindent	For $i \in \{2,3,...,n-1\} $, from \eqref{w2}, we obtain
\begin{align*}
(i\mu +(n-i)\lambda)(v_{i+1}-v_{i}) = \lambda (n-i+1) (v_{i}-v_{i-1}) .
\end{align*} 
Hence,
$w_{i+1} \triangleq v_{i+1}-v_{i} 
=\frac{\lambda (n-i+1)}{(i\mu +(n-i)\lambda)} w_i$.
\noindent Setting $i=2$ in \eqref{w2}, we have
\begin{align}
    ((n-1)\lambda+\mu)v_{2}=1+ \mu v_{1} +\lambda (n-1)v_{1}.
    \label{w_2_source}
\end{align}
Simplifying \eqref{w_2_source}, we obtain 
 $w_{2}=v_{2}-v_{1}=\frac{1}{(n-1)\lambda+\mu}$. Therefore, we write
 \begin{align}
w_j &= \frac{1}{n \lambda} \prod\limits_{i=1}^{j-1} \frac{\lambda (n-i+1)}{i \mu  + (n-i) \lambda}, 2 \le j \le n.
\label{w_j}
 \end{align}
Finally, setting $i=n$ in \eqref{w2}, 
\begin{align}
   (\lambda+ (n-1)\mu)v_{n}=  1+ \mu \sum_{j=1}^{n-1} v_{j}+ \lambda v_{n-1} ,
\end{align}
implying
 $   \mu  \sum_{i=1}^{n} v_{i}
 =   \mu \sum_{j=1}^{n-1} v_{j}+\mu v_{n}  
 =   (\lambda+ (n-1)\mu)v_{n} + \mu v_{n} -1 - \lambda v_{n-1}.
$
Hence,
\begin{align}\label{eq2}
 \frac{1}{n} \sum_{i=1}^{n} v_{i} = \frac{\lambda}{n\mu} w_{n} +v_{n}-\frac{1}{n\mu}.
\end{align}
Combining \eqref{eq3} and \eqref{eq2}, we obtain the average AoI as
\begin{align*}
AoI = v_0=v_{n} +\frac{\lambda}{n\mu} w_{n} =\sum_{j=2}^{n} w_{j} +\frac{1}{n\lambda} +\frac{\lambda}{n\mu} w_{n},
\end{align*}
which is simplified to \eqref{AoI_single_source} using \eqref{w_j}.
\end{proof}
Figure \ref{fig4} shows AoI when the total arrival rate $n\lambda$ is fixed and $n=1,2,3,4,10$. We observe that for up to $4$ servers,  a significant decrease in AoI occurs with the increase of $n$. However, increasing the number of servers beyond $4$ provides only a negligible decrease in AoI. 
\begin{figure}
\centering
\includegraphics[width = 0.43\textwidth]{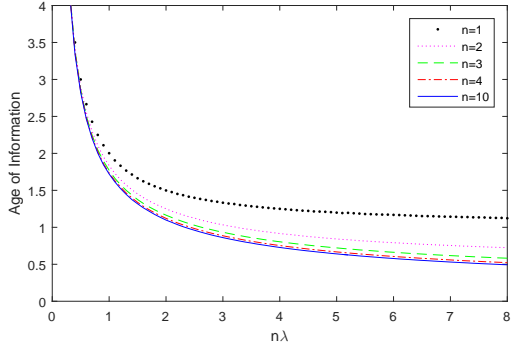} 
\caption{AoI versus the number of servers, for fixed total arrival rate. For each server, the service rate $\mu=1$ and the total arrival rate $n\lambda$ is shown in the x-axis.}
\label{fig4}
\end{figure}
In Figure \ref{fig6}, LCFS (with preemption in service), LCFS with preemption in waiting, and FCFS queue models are compared numerically. As can be seen from the figure, LCFS outperforms the other two queue  models, which coincides with the intuition that exponential service time is memoryless and older updates in service should be preempted. Moreover, we observe that the optimal arrival rate for FCFS queue is approximately $0.5$ for all $n \le 50$.
\begin{figure}
\centering
\includegraphics[width=0.45\textwidth]{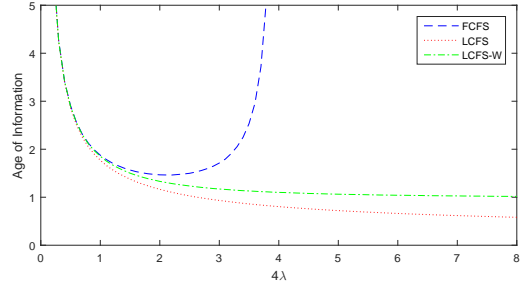}
\caption{Comparison of LCFS, FCFS, and LCFS with preemption in waiting (LCFS-W). The number of servers is $n=4$ and $\mu=1$ for each server.}
\label{fig6}
\end{figure}
\subsection{Multiple Sources Multiple Sensors}
\label{multiple}
In this subsection, we present AoI calculation with the LCFS queue for the $m$-source $n$-server homogeneous network. The arrival rate of source $i$ at any server is $ \lambda_{j}^{(i)}= \lambda^{(i)} $, for all $i~\in~[m], j~\in ~[n]$. The arrival rate of the sources other than source $i$ is $\overline{\lambda^{(i)}} ~\triangleq~\sum_{i' \neq i} \lambda^{(i')},  i~  \in~[m]$. The service rate at any server is $\mu$. Let $\Delta_{i}$ denote the average AoI at the monitor for source $i~  \in~[m]$. Without loss of generality, we calculate $\Delta_1$ for source $1$. In the queue model, upon arrival of a new update from any source, each server immediately drops any previous update in service and starts to serve the new update.  

The continuous state $\mathbf{x}$ represents the age for source $1$, and similar to the single-source case, it is defined as follows:
$x_0$ is AoI of source $1$ at the monitor, $x_1$ is the age of the freshest update among all updates of source $1$ in the servers, $x_2$ corresponds to the second freshest update in the servers, etc. Therefore $x_{1} \leq x_{2} \leq .... \leq x_{n}$, for any time. 
Using fake updates and fake preemption as explained in Section~\ref{LCFS}, we obtain an SHS with a single discrete state and $3n$ transitions described below: 

{\bf Case I.}	$l \in [0: n-1]$: A fresh update arrives at virtual server $l$ from source 1. This update is the freshest update, so $x_1' = 0$. Now, the previous freshest update becomes the second freshest update, that is $x_2' = x_1$, and so on. Then $\mathbf{x}' = [x_0,0, x_1, \ldots,x_{l}, x_{l+2}, \ldots,x_{n } ]$. The transition rate is~$\lambda^{(1)}$.

{\bf Case II.}	$l \in [n: 2n-1]$: A fresh update arrives at virtual server $l' \triangleq l+1-n$ from source $i \neq 1$. The age at the monitor does not change, namely, $x_0' = x_0$. The $l'$-th freshest update is preempted. Moreover, if the virtual server $l'$ does complete service, it does not reduce the age of the source of interest. Thus, the $l'$-th virtual server becomes the $n$-th virtual server with age $x_0$. Therefore, we have
$\mathbf{x}' = [x_0,x_1,   \ldots, x_{l' -1}, x_{l' +1} \ldots , x_{n}, x_0]$. The transition rate is $\overline{\lambda^{(1)}}$.

{\bf Case III.} $l \in [2n: 3n-1]$: the update of source $1$ in virtual server $h \triangleq l+1-2n $ is delivered. The age $x_0$ is reset to $x_{h}$ and the virtual server $h$ becomes idle. Using fake update and fake preemption, we reset $x_l' = x_h, h \le j \le  n$. The transition rate is $\mu$.

Dropping the index $q=0$ and denoting $\mathbf{v}_0=\mathbf{v}=[v_0,v_1,\dots,v_n]$, the system of equations for the model is
\begin{align}
n \mu v_{0} &= 1 + \mu \sum_{i=1}^{n} v_{i} , \nonumber\\
v_{1}(\overline{\lambda^{(1)}}+n\lambda^{(1)})&=1+ \overline{\lambda^{(1)}}v_{2}, \nonumber\\
n (\lambda + \mu ) v_i &= 1 + (i-1) \lambda^{(1)} v_i + (n-i+1) \lambda^{(1)} v_{i-1} \nonumber\\ 
&+i \overline{\lambda^{(1)}} v_{i+1} +  (n-i) \overline{\lambda^{(1)}} v_i \nonumber\\ 
&+ \mu \sum\limits_{j=1}^{i-1} v_j + (n-i+1) \mu v_i, \quad 2 \le i  \le n,
\label{equiv_system}
\end{align}
\noindent	where $v_{n+1} \triangleq v_0$ and $\lambda=\overline{\lambda^{(1)}}+\lambda^{(1)}=\sum_{i=1}^{n} \lambda_i$.

The theorems below state the average AoI for $n=2,3$ servers, and determine the optimal arrival rate given the sum arrival rate.
\begin{theorem}
\label{prop:all_2_servers}
Let $AoI_{i}$ denote AoI at the monitor for source $i$. For $m$ information sources and $n=2$ servers, we have
\begin{align}
\Delta_{i} = \frac{1}{2 (\lambda + \mu)} + \frac{\lambda + \mu}{2 \mu \lambda^{(i)}}, \quad 1 \leq i \leq m.
\label{all_2_servers}
\end{align}
\end{theorem}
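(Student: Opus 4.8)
The plan is to specialize the general system \eqref{equiv_system} to $n=2$ and solve the resulting three-equation linear system directly for $v_0$, which by Lemma \ref{lem:yates} equals the average AoI $\Delta_1$ of source $1$ (there is a single discrete state, so $\Delta_1=v_{00}=v_0$). Setting $n=2$ and recalling the convention $v_{n+1}\triangleq v_0$, the $i=2$ line of \eqref{equiv_system} becomes, after substituting $v_3=v_0$ and collecting the $v_2$ terms, a relation among $v_0,v_1,v_2$. Concretely the system reduces to
\begin{align*}
2\mu v_0 &= 1 + \mu(v_1 + v_2), \\
(\overline{\lambda^{(1)}} + 2\lambda^{(1)})\, v_1 &= 1 + \overline{\lambda^{(1)}}\, v_2, \\
(\lambda^{(1)} + 2\overline{\lambda^{(1)}} + \mu)\, v_2 &= 1 + (\lambda^{(1)} + \mu)\, v_1 + 2\overline{\lambda^{(1)}}\, v_0.
\end{align*}

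First I would eliminate the two auxiliary unknowns. From the first equation, $v_1 + v_2 = 2v_0 - 1/\mu$, so $v_2$ is expressible through $v_0$ and $v_1$. Substituting this into the second equation and using $\lambda^{(1)} + \overline{\lambda^{(1)}} = \lambda$ collapses it to $2\lambda v_1 = 1 + 2\overline{\lambda^{(1)}} v_0 - \overline{\lambda^{(1)}}/\mu$, which gives $v_1$ as an explicit affine function of $v_0$. Plugging both $v_1$ and $v_2$ into the third equation then yields a single linear equation in $v_0$.

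The crux is the simplification of that final equation. After clearing denominators, the coefficient of $v_0$ on the left collapses via $\lambda - \overline{\lambda^{(1)}} = \lambda^{(1)}$ to $2\lambda^{(1)}(\lambda+\mu)$, while the constant side simplifies—again using $\lambda^{(1)} + \overline{\lambda^{(1)}} = \lambda$—to $\bigl((\lambda+\mu)^2 + \lambda^{(1)}\mu\bigr)/\mu$. Solving gives
\begin{align*}
v_0 = \frac{(\lambda+\mu)^2 + \lambda^{(1)}\mu}{2\lambda^{(1)}\mu(\lambda+\mu)} = \frac{1}{2(\lambda+\mu)} + \frac{\lambda+\mu}{2\mu\lambda^{(1)}},
\end{align*}
which is \eqref{all_2_servers} for $i=1$. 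I expect these cancellations to be the main obstacle: the intermediate expressions are bulky, and the clean two-term answer emerges only after repeatedly invoking the identities $\lambda^{(1)} + \overline{\lambda^{(1)}} = \lambda$ and $\lambda - \overline{\lambda^{(1)}} = \lambda^{(1)}$, so careful bookkeeping (especially of the wrap-around term $v_3=v_0$) is essential.

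Finally, the result for general $i$ follows by symmetry: source $i$ satisfies the identical system with $\lambda^{(i)}$ and $\overline{\lambda^{(i)}}$ playing the roles of $\lambda^{(1)}$ and $\overline{\lambda^{(1)}}$, while the total arrival rate $\lambda = \lambda^{(i)} + \overline{\lambda^{(i)}}$ is unchanged. Hence $\Delta_i$ is obtained from the expression above by replacing $\lambda^{(1)}$ with $\lambda^{(i)}$, which is exactly \eqref{all_2_servers}.
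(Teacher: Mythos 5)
Your proposal is correct and follows essentially the same route as the paper: both specialize the SHS balance equations to $n=2$, obtain the identical three-equation linear system in $v_0,v_1,v_2$ (your coefficients $\overline{\lambda^{(1)}}+2\lambda^{(1)}=\lambda+\lambda^{(1)}$ and $\lambda^{(1)}+2\overline{\lambda^{(1)}}+\mu=\lambda+\overline{\lambda^{(1)}}+\mu$ match the paper's), and solve for $v_0$; your elimination indeed yields $v_0=\frac{(\lambda+\mu)^2+\lambda^{(1)}\mu}{2\lambda^{(1)}\mu(\lambda+\mu)}$, which equals \eqref{all_2_servers}. The only difference is that you spell out the elimination and the symmetry argument for general $i$, which the paper leaves implicit.
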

\begin{proof}
From \eqref{equiv_system}, we write
\begin{align*}
n (\lambda + \mu) [v_0,v_1,v_2] &= [1,1,1] \\
+& \lambda^{(1)} [v_0, 0, v_2] \\
+& \lambda^{(1)} [v_0, 0, v_1] \\
+& \overline{\lambda^{(1)}} [v_0, v_2, v_0] \\
+& \overline{\lambda^{(1)}} [v_0, v_1, v_0] \\
+& \mu [v_1, v_1, v_1] \\
+& \mu [v_2, v_1, v_2]
\end{align*}
From the $0$-th coordinate, we have
\begin{align*}
n (\lambda + \mu)v_0 &= 1 + n \lambda v_0 +  \mu (v_1  +v_2)\\
n \mu v_0  &= 1 +  \mu (v_1  +v_2) \\
v_0 &= \frac{1}{n \mu} + \frac{v_1  +v_2}{n}
\end{align*}
From the $1$-st coordinate, we have
\begin{align*}
2 (\lambda + \mu)v_1 &= 1 + \overline{\lambda^{(1)}}  (v_1  +v_2) + 2 \mu v_1 \\
2 \lambda v_1 &= 1 + \overline{\lambda^{(1)}} v_1 + \overline{\lambda^{(1)}} v_2  \\
(\lambda + \lambda^{(1)}) v_1 &= 1 + \lambda_2 v_2 \\
v_1 &= \frac{1}{\lambda + \lambda^{(1)}} + \frac{\overline{\lambda^{(1)}} v_2 }{\lambda + \lambda^{(1)}}
\end{align*}
From the $2$-nd coordinate, we have
\begin{align*}
2 (\lambda + \mu)v_2 = 1 + \lambda^{(1)}  (v_1  +v_2) + 2 \overline{\lambda^{(1)}} v_0 +  \mu  (v_1  +v_2) \\
2 (\lambda + \mu)v_2 = 1 + (\lambda^{(1)} +\mu) v_1    + (\lambda^{(1)} +\mu  ) v_2  + 2 \overline{\lambda^{(1)}} v_0   \\
 ( \lambda + \overline{\lambda^{(1)}} + \mu  )v_2 =  1 + (\lambda^{(1)} +\mu) v_1  + n \overline{\lambda^{(1)}} v_0  \\
 v_2 = \frac{1}{\lambda + \overline{\lambda^{(1)}} + \mu }   + \frac{(\lambda^{(1)} +\mu) v_1 }{\lambda + \overline{\lambda^{(1)}} + \mu  }
 + \frac{n \overline{\lambda^{(1)}} v_0 }{\lambda + \overline{\lambda^{(1)}} + \mu  }
\end{align*}
Solving these equations followed by algebraic simplifications results in \eqref{all_2_servers}.
\end{proof}

\begin{theorem}
\label{prop:all_3_servers}
For $m$ information sources and $n=3$ servers, we have
\begin{align*}
\Delta_i  = \frac{1}{3 \mu} \frac{(5 \rho^{(1)} + 2 (\rho + 1)^2) (\rho + 1)}{2 \rho^3 + 5 \rho^{(1)} \rho + 2 \rho^{(1)}}, \quad  1 \leq i  \leq m,  
\end{align*}
where $\rho= \frac{\lambda}{\mu}$ and $\rho^{(i)}= \frac{\lambda^{(i)}}{\mu}$.
\end{theorem}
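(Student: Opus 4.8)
The plan is to mirror the proof of Theorem~\ref{prop:all_2_servers}, now specializing the balance equations~\eqref{equiv_system} to $n=3$ and solving for $v_0$; by Lemma~\ref{lem:yates} we have $\Delta_1 = v_0$, and the formula for a general $\Delta_i$ follows from the symmetry of the sources, replacing $\lambda^{(1)}$ with $\lambda^{(i)}$ throughout. Substituting $n=3$, using $\overline{\lambda^{(1)}}=\lambda-\lambda^{(1)}$ and the convention $v_4\triangleq v_0$, the four coordinates of~\eqref{equiv_system} read
\begin{align*}
3\mu\, v_0 &= 1 + \mu(v_1+v_2+v_3), \\
(\overline{\lambda^{(1)}}+3\lambda^{(1)})\, v_1 &= 1 + \overline{\lambda^{(1)}} v_2, \\
3(\lambda+\mu)\, v_2 &= 1 + (\lambda+2\mu) v_2 + (2\lambda^{(1)}+\mu) v_1 + 2\overline{\lambda^{(1)}} v_3, \\
3(\lambda+\mu)\, v_3 &= 1 + (2\lambda^{(1)}+\mu) v_3 + (\lambda^{(1)}+\mu) v_2 + \mu v_1 + 3\overline{\lambda^{(1)}} v_0.
\end{align*}

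First I would use the $0$-th equation to eliminate the aggregate $v_1+v_2+v_3$ wherever it helps; in particular, substituting $\mu(v_1+v_2) = 3\mu v_0 - 1 - \mu v_3$ into the last equation collapses its $\mu$-sum and yields a clean relation among $v_2$, $v_3$, and $v_0$ alone. The second equation gives $v_1$ as an affine function of $v_2$, and the third equation (after moving the $(\lambda+2\mu)v_2$ term to the left, so that the coefficient of $v_2$ becomes $2\lambda+\mu$) ties $v_2$ to $v_1$ and $v_3$. This reduces the system to a short cascade: solve for $v_1,v_2,v_3$ in terms of $v_0$, then close the loop through the $0$-th equation to get a single linear equation for $v_0$, whose solution is a ratio of polynomials in $\lambda,\lambda^{(1)},\mu$.

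The concluding step is to pass to dimensionless variables by dividing numerator and denominator by a suitable power of $\mu$, writing $\rho=\lambda/\mu$, $\rho^{(1)}=\lambda^{(1)}/\mu$ and $\overline{\lambda^{(1)}}/\mu = \rho-\rho^{(1)}$, and then factoring the result into the claimed form $\tfrac{1}{3\mu}\,\tfrac{(5\rho^{(1)}+2(\rho+1)^2)(\rho+1)}{2\rho^3+5\rho^{(1)}\rho+2\rho^{(1)}}$. I expect the main obstacle to be precisely this final simplification: the raw $v_0$ from the cascade is a quotient of multivariate polynomials, and recognizing the common factor $(\rho+1)$ in the numerator while verifying that the denominator contracts to the cubic $2\rho^3+5\rho^{(1)}\rho+2\rho^{(1)}$ is where arithmetic slips are most likely. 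To guard against this I would use the single-source specialization $\lambda^{(1)}=\lambda$ (so $\rho^{(1)}=\rho$) as a consistency check: the formula must then collapse to the $n=3$ instance of~\eqref{AoI_single_source}, namely $\tfrac{1}{\mu}\cdot\tfrac{2\rho^3+11\rho^2+11\rho+2}{3\rho(1+2\rho)(2+\rho)}$, and matching against this known value validates the elimination before committing to the final factored answer.
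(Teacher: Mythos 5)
Your proposal is correct and follows essentially the same route as the paper: specialize the SHS balance equations \eqref{equiv_system} to $n=3$, obtain the same four coordinate equations (including the $3\overline{\lambda^{(1)}}v_0$ term in the $v_3$ equation, which the paper's displayed derivation silently drops in an intermediate line), solve the resulting linear cascade for $v_0$, and rewrite in terms of $\rho,\rho^{(1)}$. Your single-source consistency check against the $n=3$ instance of \eqref{AoI_single_source} is a sensible addition and does match the claimed formula.
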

\begin{proof}
\begin{align*}
n (\lambda + \mu) [v_0,v_1,v_2,v_3]  = &[1,1,1,1] \\
+ \lambda^{(1)} & [v_0, 0, v_2, v_3] \\
+ \lambda^{(1)} &[v_0, 0, v_1, v_3] \\
+ \lambda^{(1)} &[v_0, 0, v_1, v_2] \\
+ \overline{\lambda^{(1)}} & [v_0, v_2, v_3, v_0] \\
+ \overline{\lambda^{(1)}} &[v_0, v_1, v_3,v_0] \\
+ \overline{\lambda^{(1)}} &[v_0, v_1, v_2, v_0 ] \\
+ \mu &[v_1, v_1, v_1, v_1] \\
+ \mu &[v_2, v_1, v_2, v_2] \\
+ \mu &[v_3, v_1, v_2, v_3].
\end{align*}
At the $0$-th coordinate
\begin{align*}
n (\lambda + \mu)  v_0 &=  1 + n\lambda  v_0 + \mu (v_1 +v_2 +v_3) \\
n  \mu v_0 &= 1 + \mu(v_1 +v_2 +v_3) \\
v_0 &= \frac{1}{n  \mu} +\frac{v_1 +v_2 +v_3}{3}
\end{align*}
At the $1$-st coordinate
\begin{align*}
3 (\lambda + \mu)  v_1 &=  1 +  \overline{\lambda^{(1)}}  v_2 + 2  \overline{\lambda^{(1)}}  v_1 + 3 \mu v_1 \\
3 \lambda v_1 &=  1 +  \overline{\lambda^{(1)}}  v_2 + 2  \overline{\lambda^{(1)}}  v_1  \\
(\lambda + 2 \lambda^{(1)} )v_1 &= 1 +  \overline{\lambda^{(1)}}  v_2 \\
v_1 &= \frac{1}{\lambda + 2 \lambda^{(1)}} + \frac{\overline{\lambda^{(1)}}}{\lambda + 2 \lambda^{(1)}}v_2
\end{align*}
At the $2$-nd coordinate
\begin{align*}
3 (\lambda + \mu)  v_2 =  1 +  \lambda^{(1)}  v_2 +  2 \lambda^{(1)} v_1 + 2 \overline{\lambda^{(1)}} v_3\\ + \overline{\lambda^{(1)}} v_2 + \mu v_1 + 2 \mu v_2,\\
(3\lambda +  3 \mu  -\lambda^{(1)} -\overline{\lambda^{(1)}}-  2 \mu) v_2 = 1+  \\ (2 \lambda^{(1)} + \mu) v_1 + 2 \overline{\lambda^{(1)}} v_3, \\
( 2 \lambda +  \mu) v_2 = 1   + (2 \lambda^{(1)} + \mu) v_1 + 2 \overline{\lambda^{(1)}} v_3 \\
\end{align*}
At the $3$-rd coordinate
\begin{align*}
 (3 \lambda + 3 \mu)  v_3 =  1 +   2 \lambda^{(1)} v_3 + \lambda^{(1)}  v_2 +\\ 3 \overline{\lambda^{(1)}} v_0 + \mu (v_1 + v_2 +v_3), \\
  (3 \lambda + 3 \mu - 2 \lambda^{(1)} - \mu)  v_3 =  1 + \mu v_1 + (\lambda^{(1)} + \mu) v_2 \\
  (\lambda + 2 \overline{\lambda^{(1)}}  + 2 \mu)v_3 =  1 + \mu v_1 + (\lambda^{(1)} + \mu) v_2 
\end{align*}

Then, we have 
\begin{align}
v_0 = \frac{(\lambda + \mu) (2 \lambda^2 + 4 \lambda \mu + 2 \mu^2 + 5 \lambda^{(1)} \mu)}{6 \lambda^3 \mu + 15 \lambda^{(1)} \lambda \mu^2 + 6 \lambda^{(1)} \mu^3},
\end{align}

And the age is 
 \begin{align*}
 v_0 &= \frac{(\lambda + \mu)  ( 2   (\lambda+ \mu)^2 + 5 \lambda^{(1)} \mu)}{3 \mu (2 \lambda^3 + 5   \lambda \lambda^{(1)}  \mu + 2 \lambda^{(1)} \mu^2 ) } \\
 &= \frac{(\lambda + \mu)  ( 2   (\lambda+ \mu)^2 + 5 \lambda^{(1)} \mu)}{3 \mu (2 \lambda^{(1)} (\lambda + \mu)^2 +  2 \lambda^2 \overline{\lambda^{(1)}}  + \lambda \lambda^{(1)} \mu )} \\
 &= \frac{1}{3 \mu} \frac{(5 \rho^{(1)} + 2 (\rho + 1)^2) (\rho + 1)}{2 \rho^3 + 5 \rho^{(1)} \rho + 2 \rho^{(1)}}.
 \end{align*}
\end{proof}

\begin{theorem} \label{convex}
Consider $m$ information sources and $n = 2$ servers. The optimal arrival rate ${\lambda^{(i)}}^*$ minimizing the weighted sum of AoIs in theorem \ref{prop:all_2_servers}, i.e., $w_1 \Delta_1 + w_2 \Delta_2+...+w_n \Delta_n$ for $w_i \geq 0$, subject to the constraint $\lambda^{(1)}+ \lambda^{(2)}+...+ \lambda^{(m)} = \lambda$, is given by
\begin{align*}
{\lambda^{(i)}}^{*}=\frac{\lambda \sqrt{w_i} }{\sum_{i=1}^{m} \sqrt{w_i}},  i \in [m]. 
\end{align*}
\end{theorem}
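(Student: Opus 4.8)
The plan is to reduce the constrained minimization to its essential part and then solve it by a standard convex-optimization argument. First I would substitute the closed form for $\Delta_i$ from Theorem~\ref{prop:all_2_servers} into the objective, obtaining
\[
\sum_{i=1}^{m} w_i \Delta_i = \frac{1}{2(\lambda+\mu)}\sum_{i=1}^{m} w_i + \frac{\lambda+\mu}{2\mu}\sum_{i=1}^{m} \frac{w_i}{\lambda^{(i)}}.
\]
Since $\lambda$ and $\mu$ are fixed, the first summand is a constant and the positive prefactor $\frac{\lambda+\mu}{2\mu}$ does not affect the location of the minimizer. Hence the problem reduces to minimizing $g(\lambda^{(1)},\dots,\lambda^{(m)}) \triangleq \sum_{i=1}^{m} w_i/\lambda^{(i)}$ over the simplex $\{\lambda^{(i)}\ge 0 : \sum_{i=1}^{m}\lambda^{(i)}=\lambda\}$.

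Next I would verify that this is a convex program, which is what makes any stationary point globally optimal (and justifies the name of the theorem). Each map $\lambda^{(i)}\mapsto w_i/\lambda^{(i)}$ is convex on $(0,\infty)$ because its second derivative $2w_i/(\lambda^{(i)})^3$ is nonnegative; the objective is a sum of such terms, and the feasible set is cut out by one affine equality together with nonnegativity. Assuming $w_i>0$ for all $i$, we have $g\to\infty$ as any $\lambda^{(i)}\to 0^{+}$, so the minimizer lies in the interior of the simplex; boundary points are therefore ruled out and the Lagrange (KKT) conditions become both necessary and sufficient for global optimality.

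Then I would apply Lagrange multipliers. Zeroing $\partial/\partial\lambda^{(i)}$ of $g-\nu\big(\sum_{i=1}^m \lambda^{(i)}-\lambda\big)$ gives $-w_i/(\lambda^{(i)})^2=\nu$ for every $i$, whence $\lambda^{(i)}\propto\sqrt{w_i}$; imposing $\sum_{i=1}^m\lambda^{(i)}=\lambda$ fixes the constant and yields ${\lambda^{(i)}}^{*}=\lambda\sqrt{w_i}\big/\sum_{i=1}^{m}\sqrt{w_i}$, as claimed. A self-contained alternative avoiding multipliers is Cauchy--Schwarz: taking $a_i=\sqrt{w_i}/\sqrt{\lambda^{(i)}}$ and $b_i=\sqrt{\lambda^{(i)}}$ gives $\big(\sum_i\sqrt{w_i}\big)^2=\big(\sum_i a_i b_i\big)^2\le\big(\sum_i w_i/\lambda^{(i)}\big)\big(\sum_i\lambda^{(i)}\big)$, so $g\ge\big(\sum_i\sqrt{w_i}\big)^2/\lambda$ with equality exactly when $\sqrt{w_i}/\lambda^{(i)}$ is constant in $i$, recovering the same optimizer.

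There is no serious obstacle here; the computation is routine once the objective has been simplified. The only points needing care are the justification that the interior stationary point is the global minimum rather than a saddle or a boundary optimum --- which is precisely why convexity and the blow-up of $g$ at the boundary must be invoked --- and the degenerate case $w_i=0$, for which the formula correctly returns ${\lambda^{(i)}}^{*}=0$ and the corresponding term contributes nothing to the weighted sum.
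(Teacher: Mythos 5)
Your proposal is correct and follows essentially the same route as the paper: reduce to minimizing $\sum_i w_i/\lambda^{(i)}$ subject to the sum constraint, invoke convexity, and solve the Lagrangian stationarity condition $w_i/(\lambda^{(i)})^2 = \text{const}$. Your version is somewhat more careful than the paper's (which merely asserts convexity and does not address boundary points), and the Cauchy--Schwarz argument you add is a nice self-contained alternative, but the core argument is the same.
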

\begin{proof}
The objective function that we are trying to minimize is convex (it is obvious from the second derivative matrix) and therefore we just have to put the derivative with respect to each $\lambda^{(i)}$ equal to zero.
\begin{align} \label{deri}
 \frac{\partial}{\partial \lambda^{(i)}}   ( w_1 \Delta_1 + w_2 \Delta_2+...+w_n \Delta_n +a(\sum_{i=1}^{n} \lambda^{(i)}-\lambda)) =0,
\end{align}
for i $\in [m]$. Simplifying \eqref{deri} results in:
\begin{align}
    \frac{w_{1}}{(\lambda^{(1)})^{2}}= \frac{w_{2}}{(\lambda^{(2)})^{2}}=\dots=\frac{w_{n}}{(\lambda^{(n)})^{2}}=a.
\end{align}
Knowing the fact that $\lambda^{(1)}+ \lambda^{(2)}+...+ \lambda^{(m)} = \lambda$, we obtain the result in theorem~\ref{convex}.
\end{proof}

\section{Heterogeneous Network for a Single Source}
\label{hetro-sec}
In this section, we consider a single source and assume that the arrival and service rates of the servers are arbitrary. 
We denote by $\lambda_{j}^{(1)} \triangleq \lambda_j$ the arrival rate of the single source at server $j$, and $\mu_j$ the service rate of server $j \in [n]$.
For this setting, we can no longer use the same technique used in the homogeneous case to reduce the state space and derive AoI. In particular, we need to keep track of the age of updates at the physical servers as well as their ordering, resulting in $n!$ number of states. In the following, we illustrate the steps for deriving AoI in the case of $n=2,3$ servers.
\begin{theorem}
Consider $m=1$ source and $n=2$ heterogeneous servers. The AoI is given by 
\begin{small}
\begin{align}
\label{single_soure_2_hetereg_servers}
&\Delta=  \\
&\frac{1}{\mu_{1}+\mu_{2}} + \frac{1}{\lambda_{1}+\lambda_{2}}+\frac{1}{\mu_{1}+\mu_{2}} \frac{1}{\lambda_{1}+\lambda_{2}} (\frac{\mu_{1}\lambda_{2}}{\lambda_{1}+\mu_{2}} + \frac{\mu_{2}\lambda_{1}}{\lambda_{2}+\mu_{1}}).\nonumber
\end{align}
\end{small}
\end{theorem}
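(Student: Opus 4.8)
The plan is to reuse the stochastic hybrid system (SHS) framework of Lemma~\ref{lem:yates}, exactly as in the homogeneous derivation of Section~\ref{LCFS}, but with a discrete state space large enough to record which physical server currently carries the fresher update. As noted in the text preceding the statement, for $n=2$ the symmetry between the two servers is broken, so instead of the single collapsed state used for homogeneous servers I would use the $2!=2$ discrete states: let $A$ denote ``physical server $1$ is the freshest'' and $B$ denote ``physical server $2$ is the freshest.'' I keep the ordered continuous state $\mathbf{x}=[x_0,x_1,x_2]$ with $x_1\le x_2$ (monitor age, freshest server age, second server age), and I retain the \emph{fake update} and \emph{fake preemption} devices of Section~\ref{LCFS} so that both servers are always virtually busy and the invariant $x_2\le x_0$ is preserved.

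First I would tabulate the four transitions out of each state together with their linear maps $A_l$. From state $A$: an arrival at server $1$ (rate $\lambda_1$, map $\mathbf{x}\mapsto[x_0,0,x_2]$, staying in $A$); an arrival at server $2$ (rate $\lambda_2$, map $\mathbf{x}\mapsto[x_0,0,x_1]$, switching to $B$); a delivery from the freshest server $1$ (rate $\mu_1$, map $\mathbf{x}\mapsto[x_1,x_1,x_1]$, staying in $A$); and a delivery from server $2$ (rate $\mu_2$, map $\mathbf{x}\mapsto[x_2,x_1,x_2]$, staying in $A$). The transitions out of $B$ follow by interchanging servers $1$ and $2$. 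Since the only events that switch the discrete state are arrivals at the currently non-freshest server, the embedded chain merely alternates $A\rightleftarrows B$ at rates $\lambda_2$ and $\lambda_1$, giving the stationary distribution $\pi_A=\lambda_1/(\lambda_1+\lambda_2)$ and $\pi_B=\lambda_2/(\lambda_1+\lambda_2)$.

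Next I would write the six equations of \eqref{eq:yateslemma} for $\mathbf{v}_A=[v_{A0},v_{A1},v_{A2}]$ and $\mathbf{v}_B=[v_{B0},v_{B1},v_{B2}]$, all with common outgoing rate $\lambda_1+\lambda_2+\mu_1+\mu_2$, and exploit the $(1\leftrightarrow 2,\,A\leftrightarrow B)$ symmetry so that each $B$-quantity equals the corresponding $A$-quantity with the server rates swapped. The first coordinate decouples and yields $v_{A1}=\lambda_1/(\lambda_1+\lambda_2)^2$ at once, and the second coordinate then gives $v_{A2}$ in closed form. The crucial shortcut is that I never solve the coupled pair $(v_{A0},v_{B0})$ in full: adding the two $0$-th coordinate equations cancels the cross terms $\lambda_1 v_{B0}$ and $\lambda_2 v_{A0}$ and leaves $(\mu_1+\mu_2)(v_{A0}+v_{B0})=1+\mu_1 v_{A1}+\mu_2 v_{A2}+\mu_2 v_{B1}+\mu_1 v_{B2}$. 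Since $\Delta=v_{A0}+v_{B0}$ by Lemma~\ref{lem:yates}, substituting the already-computed $v_{A1},v_{A2},v_{B1},v_{B2}$ and regrouping the products $\mu_i\lambda_j$ collapses directly to \eqref{single_soure_2_hetereg_servers}.

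The part I expect to be the main obstacle is pinning down the discrete-state transitions correctly, not the algebra, which is routine once the table is fixed. After a delivery the fake preemption forces both server ages to coincide, so the relative order of the two physical servers is momentarily undefined; one must adopt a consistent convention (I keep the freshest physical server as virtual server $1$ after a delivery, which makes every delivery a self-loop in the discrete chain) and check that this choice, together with the relabeling of physical servers into virtual positions after each arrival, is compatible with the invariant $x_2\le x_0$ that justifies setting $x_0'=x_2$ on a delivery from the second server. Getting this bookkeeping right is exactly what certifies that the maps $A_l$ above are correct; everything downstream is linear algebra.
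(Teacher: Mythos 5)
Your proposal is correct and follows essentially the same route as the paper: a two-state SHS (which physical server holds the freshest update), the stationary distribution $\pi=[\lambda_1/(\lambda_1+\lambda_2),\,\lambda_2/(\lambda_1+\lambda_2)]$, the same eight transition maps, and the same trick of summing the two $0$-th coordinate equations so that only $v_{A1},v_{A2},v_{B1},v_{B2}$ are needed. The only cosmetic difference is that you index the continuous state by virtual (ordered) position while the paper indexes by physical server, which is an equivalent relabeling.
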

\begin{proof}
We define state $1$ as the state that server $1$ contains a fresher update compared to server $2$ and state $2$ as the state that server $2$ has the fresher update. Upon arrival of an  update at each server or receipt of an update at the monitor, we observe some self-transition and intra-state transitions. Transitions rate and mappings are illustrated in Table~\ref{table2}.
\vspace{0.25cm}
\begin{table}
\centering
\begin{tabular}{ cccccccccc }
 $l$ & $\lambda{(l)}$ & Transition&$\mathbf{x}^\prime$ =$\mathbf{x}A_{l}$ & $v_{q_{l}} A_{l}$&\\          \hline
 $1$ & $\lambda_{1}$ & $1 \rightarrow 1 $ & $[x_{0},0,x_{2}]$ &$[v_{10},0,v_{12}]$&\\ \hline
 $2$ & $\lambda_{1}$ & $2 \rightarrow 1 $ & $[x_{0},0,x_{2}]$ &$[v_{20},0,v_{22}]$&\\ \hline
 $3$ & $\lambda_{2}$ & $1 \rightarrow 2 $&$[x_{0},x_{1},0]$ &$[v_{10},v_{11},0]$&\\ \hline
 $4$ & $\lambda_{2}$ & $2 \rightarrow 2 $&$[x_{0},x_{1},0]$ &$[v_{20},v_{21},0]$&\\ \hline
 $5$ & $\mu_{1}$ & $1 \rightarrow 1 $ & $[x_{1},x_{1},x_{1}]$ &$[v_{11},v_{11},v_{11}]$&\\  \hline
 $6$ & $\mu_{1}$ & $2 \rightarrow 2 $&$[x_{1},x_{1},x_{2}]$ &$[v_{21},v_{21},v_{22}]$&\\  \hline
 $7$ & $\mu_{2}$ & $1 \rightarrow 1 $&$[x_{2},x_{1},x_{2}]$ &$[v_{12},v_{11},v_{12}]$&\\  \hline
 $8$ & $\mu_{2}$ & $2 \rightarrow 2 $&$[x_{2},x_{2},x_{2}]$ &$[v_{22},v_{22},v_{22}]$& \\  
\end{tabular}
\caption{Table of transitions for $n=2$ heterogeneous servers.}
\label{table2}
\end{table}
Steady states probabilities are found knowing that $\pi_{1}+\pi_{2}=1$ and $\pi_{1} \lambda_{2}= \pi_{2} \lambda_{1}$. Therefore, we will have $\mathbf{\pi}=[\frac{\lambda_{1}}{\lambda_{1}+\lambda_{2}}, \frac{\lambda_{2}}{\lambda_{1}+\lambda_{2}}]$. 
\begin{multline}
(\lambda_{1}+\lambda_{2}+\mu_{1}+\mu_{2})\mathbf{v_{1}}= b_{1} \pi_{1}+
\lambda_{1}(v_{10},0,v_{12})+\lambda_{1}(v_{20},0,v_{22}) \\+
\mu_{1}(v_{11},v_{11},v_{11})+\mu_{2}(v_{12},v_{11},v_{12})
\end{multline}
\begin{multline}
(\lambda_{1}+\lambda_{2}+\mu_{1}+\mu_{2})\mathbf{v_{2}}= b_{2} \pi_{2}+
\lambda_{2}(v_{10},v_{11},0)+\lambda_{2}(v_{20},v_{21},0) \\+
\mu_{1}(v_{21},v_{21},v_{22})+\mu_{2}(v_{22},v_{22},v_{22})
\end{multline}

Where $\mathbf{v_{1}}=(v_{10},v_{11},v_{12})$ and $\mathbf{v_{2}}=(v_{20},v_{21},v_{22})$. Therefore, we have six equations and six unknowns here. We can easily see that $v_{11}=\frac{\pi_{1}}{\lambda_{1}+\lambda_{2}}$ and $v_{22}=\frac{\pi_{2}}{\lambda_{1}+\lambda_{2}}$.
\begin{align*}
v_{12}&=  \frac{\pi_{1}}{\lambda_{2}+\mu_{1}}+\frac{\lambda_{1}\pi_{2}}{(\lambda_{1}+\lambda_{2})(\lambda_{2}+\mu_{1})}+\frac{\mu_{1}\pi_{1}}{(\lambda_{1}+\lambda_{2})(\lambda_{2}+\mu_{1})} \\
&=  \pi_{1} (\frac{1}{\lambda_{1}+\lambda_{2}} + \frac{1}{\lambda_{2}+\mu_{1}})
\end{align*}

\begin{align*}
v_{21}&=\frac{\pi_{2}}{\lambda_{1}+\mu_{2}}+\frac{\lambda_{2}\pi_{1}}{(\lambda_{1}+\lambda_{2})(\lambda_{1}+\mu_{2})}+\frac{\mu_{2}\pi_{2}}{(\lambda_{1}+\lambda_{2})(\lambda_{1}+\mu_{2})} \\
&=  \pi_{2} (\frac{1}{\lambda_{1}+\lambda_{2}} + \frac{1}{\lambda_{1}+\mu_{2}})
\end{align*}

\begin{equation*}
(\lambda_{1}+\lambda_{2}+\mu_{1}+\mu_{2})v_{10}=\pi_{1}+\lambda_{1}v_{10}+\lambda_{1}v_{20}+\mu_{1}v_{11}+\mu_{2}v_{12}
\end{equation*}
\begin{equation*}
(\lambda_{1}+\lambda_{2}+\mu_{1}+\mu_{2})v_{20}=\pi_{2}+\lambda_{2}v_{10}+\lambda_{2}v_{20}+\mu_{1}v_{21}+\mu_{2}v_{22}
\end{equation*}
We add this 2 equations together and simplify it. Age of Information at the monitor is equal to $v_{10}+v_{20}$ which is:
\begin{align*}
& AoI= \frac{1}{\mu_{1}+\mu_{2}}+ \frac{\mu_{1}(v_{11}+v_{21})+\mu_{2}(v_{12}+v_{22})}{\mu_{1}+\mu_{2}}= \\
&\frac{1}{\mu_{1}+\mu_{2}} + \frac{1}{\lambda_{1}+\lambda_{2}}+\frac{1}{\mu_{1}+\mu_{2}} \frac{1}{\lambda_{1}+\lambda_{2}} (\frac{\mu_{1}\lambda_{2}}{\lambda_{1}+\mu_{2}} + \frac{\mu_{2}\lambda_{1}}{\lambda_{2}+\mu_{1}})
\end{align*}

\end{proof}
Next, for $n=2$ servers, we find the optimal arrival rates of servers, ${\lambda_1}^{*},{\lambda_2}^{*}$,  given fixed service rates $\mu_1,\mu_2$ and sum arrival rate $\lambda \triangleq \lambda_1+\lambda_2$. The optimal ${\lambda_1}^{*}$ is illustrated in Figure~\ref{optimal_lambda}.
\begin{figure}
\centering
\includegraphics[width = 0.4\textwidth]{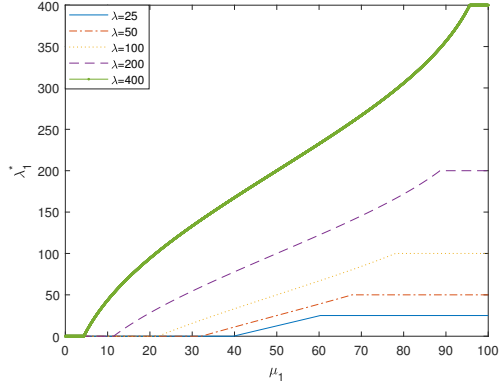} 
\caption{Optimal value of $\lambda_1$ as a function of $\mu_1$. $\lambda_1+\lambda_2=\lambda, \mu_1+\mu_2=100$.}
\label{optimal_lambda}
\end{figure}

\begin{theorem}
\label{thm:optimal_lambda_heto_n_2}
For $m=1$ and $n=2$ heterogeneous servers,  given $\mu_1,\mu_2$ and fixed $\lambda_1+\lambda_2=\lambda$, the optimal ${\lambda_{1}}^{*}$ satisfies

\noindent $\bullet$ if  $\mu_{1} < \mu_{2}$ and $\mu_{2}^2 - \frac{\mu_{1}(\lambda+\mu_{1})(\lambda+\mu_{2})}{\mu_{2}} < 0$,
\begin{align*}
 {\lambda_{1}}^{*}=\frac{-(\mu_{2}+c(\lambda+\mu_{1}))+\sqrt{\mu_{1}(\lambda+\mu_{2})(2+\frac{\mu_{2}}{\lambda+\mu_{1}}+\frac{\lambda+\mu_{1}}{\mu_{2}})}}{1- \frac{\mu_{1}(\lambda+\mu_{2})}{\mu_{2}(\lambda+\mu_{1})}},  
\end{align*}
 \noindent $\bullet$  if  $\mu_{1} < \mu_{2}$ and $\mu_{2}^2 - \frac{\mu_{1}(\lambda+\mu_{1})(\lambda+\mu_{2})}{\mu_{2}} \geq 0: $ $  {\lambda_1}^{*}=0, {\lambda_2}^{*}=\lambda,$
\noindent $\bullet$ if $\mu_{1}>\mu_{2}$ and $\mu_{1}^2 \geq \frac{\mu_{2}(\lambda+\mu_{1})(\lambda+\mu_{2})}{\mu_{1}}:$ ${\lambda_{1}}^{*}=\lambda, {\lambda_{2}}^{*}=0,$
 
 \noindent $\bullet$    if $\mu_{1}>\mu_{2}$ and $\mu_{1}^2 < \frac{\mu_{2}(\lambda+\mu_{1})(\lambda+\mu_{2})}{\mu_{1}}$.
     \begin{align*}
{\lambda_{1}}^{*}= \lambda-\frac{-(\mu_{1}+\frac{(\lambda+\mu_{2})}{c})+\sqrt{\mu_{2}(\lambda+\mu_{1})(2+\frac{\mu_{1}}{\lambda+\mu_{2}}+\frac{\lambda+\mu_{2}}{\mu_{1}})}}{1- \frac{\mu_{2}(\lambda+\mu_{1})}{\mu_{1}(\lambda+\mu_{2})}},
\end{align*}  
\end{theorem}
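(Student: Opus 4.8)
The plan is to treat the AoI expression \eqref{single_soure_2_hetereg_servers} as a function of $\lambda_1$ after eliminating $\lambda_2=\lambda-\lambda_1$, and minimize it over the closed interval $\lambda_1\in[0,\lambda]$. Because $\lambda_1+\lambda_2=\lambda$ and $\mu_1,\mu_2$ are fixed, the first two summands $\tfrac{1}{\mu_1+\mu_2}$ and $\tfrac{1}{\lambda_1+\lambda_2}=\tfrac1\lambda$ are constants, so minimizing the AoI is equivalent to minimizing
\[
g(\lambda_1)=\frac{\mu_1(\lambda-\lambda_1)}{\lambda_1+\mu_2}+\frac{\mu_2\lambda_1}{(\lambda+\mu_1)-\lambda_1}.
\]
First I would record that $g$ is strictly convex on $[0,\lambda]$: differentiating twice gives $g''(\lambda_1)=2\mu_1(\lambda+\mu_2)(\lambda_1+\mu_2)^{-3}+2\mu_2(\lambda+\mu_1)(\lambda_2+\mu_1)^{-3}>0$, a sum of strictly positive terms. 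Hence $g$ has a unique minimizer on the interval, which is either the interior stationary point or one of the two endpoints, and no further second-order checking is needed.

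Next I would compute
\[
g'(\lambda_1)=-\frac{\mu_1(\lambda+\mu_2)}{(\lambda_1+\mu_2)^2}+\frac{\mu_2(\lambda+\mu_1)}{(\lambda_2+\mu_1)^2},
\]
and solve $g'(\lambda_1)=0$. Clearing denominators turns the stationarity condition into the quadratic $\mu_2(\lambda+\mu_1)(\lambda_1+\mu_2)^2=\mu_1(\lambda+\mu_2)(\lambda+\mu_1-\lambda_1)^2$ in $\lambda_1$. Applying the quadratic formula and simplifying yields the interior candidate; the key algebraic fact is that the discriminant factors as a perfect square times the ratio $\tfrac{\mu_1(\lambda+\mu_2)}{\mu_2(\lambda+\mu_1)}$, so only a single radical survives, exactly the one appearing in the closed-form expressions of the first and last bullets. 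I expect this reduction—exhibiting the discriminant as a perfect square and collapsing the quadratic root—to be the most error-prone step of the write-up.

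Finally, I would determine which endpoint can bind by inspecting the boundary derivatives. A direct computation gives $g'(0)\ge0 \iff \mu_2^2-\tfrac{\mu_1(\lambda+\mu_1)(\lambda+\mu_2)}{\mu_2}\ge0$ and $g'(\lambda)\le0 \iff \mu_1^2\ge\tfrac{\mu_2(\lambda+\mu_1)(\lambda+\mu_2)}{\mu_1}$, which are precisely the threshold quantities in the statement. Since $g$ is convex, $g'$ is increasing, so the minimizer sits at $\lambda_1=0$ when $g'(0)\ge0$, at $\lambda_1=\lambda$ when $g'(\lambda)\le0$, and at the interior stationary point otherwise. To justify the dichotomy on $\mu_1$ versus $\mu_2$, I would note the elementary inequalities: if $\mu_1\le\mu_2$ then $\mu_2(\lambda+\mu_1)(\lambda+\mu_2)>\mu_1^3$, forcing $g'(\lambda)>0$, so the right endpoint never binds (leaving the first two bullets); symmetrically, $\mu_1\ge\mu_2$ forces $g'(0)<0$, so the left endpoint never binds (leaving the last two bullets). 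For the interior solution under $\mu_1>\mu_2$ I would invoke the symmetry of $g$ under the simultaneous swap $(\lambda_1,\mu_1)\leftrightarrow(\lambda_2,\mu_2)$, which leaves $g$ invariant; solving for the optimal $\lambda_2$ reuses the first-bullet formula with indices exchanged, and then $\lambda_1^{*}=\lambda-\lambda_2^{*}$ produces the stated expression.
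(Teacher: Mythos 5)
Your proposal is correct, and it arrives at exactly the same computational core as the paper: eliminating $\lambda_2=\lambda-\lambda_1$ turns the stationarity condition into the quadratic $\lambda_1^2(1-c)+2\lambda_1(\mu_2+c(\lambda+\mu_1))+\mu_2^2-c(\lambda+\mu_1)^2=0$ with $c=\tfrac{\mu_1(\lambda+\mu_2)}{\mu_2(\lambda+\mu_1)}$, whose discriminant indeed collapses to $4c(\lambda+\mu_1+\mu_2)^2$ and reproduces the radical in the statement. Where you genuinely differ is in the surrounding logic. The paper keeps both variables, uses a Lagrange multiplier, and decides between interior and boundary optima by Vieta's formulas (signs of the sum and product of the two roots), never verifying a second-order condition; strictly speaking, its conclusion that ``both roots negative implies $\lambda_1^*=0$'' presupposes that $g$ is monotone on $[0,\lambda]$ in that regime, which it does not establish. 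Your route --- showing $g''(\lambda_1)=2\mu_1(\lambda+\mu_2)(\lambda_1+\mu_2)^{-3}+2\mu_2(\lambda+\mu_1)(\lambda_2+\mu_1)^{-3}>0$, then reading off the regimes from the signs of $g'(0)$ and $g'(\lambda)$, which reproduce exactly the thresholds $\mu_2^2-\tfrac{\mu_1(\lambda+\mu_1)(\lambda+\mu_2)}{\mu_2}$ and $\mu_1^2-\tfrac{\mu_2(\lambda+\mu_1)(\lambda+\mu_2)}{\mu_1}$ --- closes that gap and also explains cleanly why only one endpoint can ever bind under each ordering of $\mu_1,\mu_2$. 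The swap symmetry $(\lambda_1,\mu_1)\leftrightarrow(\lambda_2,\mu_2)$ you invoke for the $\mu_1>\mu_2$ case is the same device the paper uses implicitly when it ``writes the second-order polynomial for $\lambda_2$.'' In short: same formula, same quadratic, but your boundary analysis is the more rigorous of the two.
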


\noindent where $c= \frac{\mu_{1}(\lambda+\mu_{2})}{\mu_{2}(\lambda+\mu_{1})}$.
\begin{proof}
In order to find the optimal values of $\lambda_{1}$ and $\lambda_{2}$ for a given values of $\mu_1,\mu_{2}, \lambda$ where $\lambda_{1}+\lambda_{2}=\lambda$, we set the derivative of the following equation with respect to $\lambda_{1}$, $\lambda_{2}$ and $a$ to zero.
\begin{align*}
\frac{1}{\mu_{1}+\mu_{2}}+ \frac{\mu_{1}(v_{11}+v_{21})+\mu_{2}(v_{12}+v_{22})}{\mu_{1}+\mu_{2}} - a(\lambda_{1}+\lambda_{2}-\lambda)
\end{align*}
\begin{align*}
\frac{\partial AoI}{\partial \lambda_{1}} = \frac{-1}{(\lambda_{1}+\lambda_{2})^2} 
-\frac{\mu_{1}\lambda_{2}(2\lambda_{1}+\lambda_{2}+\mu_{2})}{(\lambda_{1}+\lambda_{2})^2 (\lambda_{1}+\mu_{2})^2}\\+\frac{(\lambda_{2}+\mu_{1})(\mu_{2}\lambda_{2})}{(\lambda_{1}+\lambda_{2})^2 (\lambda_{2}+\mu_{1})^2}-a =0
\end{align*} 

\begin{align*}
\frac{\partial AoI}{\partial \lambda_{2}} = \frac{-1}{(\lambda_{1}+\lambda_{2})^2} 
-\frac{\mu_{2}\lambda_{1}(2\lambda_{2}+\lambda_{1}+\mu_{1})}{(\lambda_{1}+\lambda_{2})^2 (\lambda_{2}+\mu_{1})^2}\\+\frac{(\lambda_{1}+\mu_{2})(\mu_{1}\lambda_{1})}{(\lambda_{1}+\lambda_{2})^2 (\lambda_{1}+\mu_{2})^2}-a =0
\end{align*} 
Also, we know that $\lambda_{1}+\lambda_{2}=\lambda$. With some algebraic simplification we reach to this $2nd$ order polynomial in order to find the optimal value of $\lambda_{1}$ and consequently $\lambda_{2}$.
\begin{align} \label{opti}
\lambda_{1}^2 (1-c)+ 2\lambda_{1} (\mu_{2}+c(\lambda+\mu_{1})) + \mu_{2}^2 - c(\lambda+\mu_{1})^2,
\end{align}

\noindent where $c= \frac{\mu_{1}(\lambda+\mu_{2})}{\mu_{2}(\lambda+\mu_{1})}$.

When $c=1$ it is equivalent to $\mu_{1}=\mu_{2}$ and the equation \ref{opti} becomes a first order polynomial which results in $\lambda_{1}=\lambda_{2}= \frac{\lambda}{2}$.
This polynomial has $2$ real roots because of its positive discriminant and therefore solving the equation \ref{opti} gives us $2$ possible candidate for our optimization problem. When $\mu_{1} < \mu_{2}$ then $c<1$. Knowing the fact that for $2$ roots of \ref{opti} we have,

\begin{align*}
r_{1}+r_{2} = \frac{\mu_2 + \frac{\mu_{1}(\lambda+\mu_{2})}{\mu_{2}}}{c-1}, \\
r_{1}r_{2} = \frac{\mu_{2}^2 - \frac{\mu_{1}(\lambda+\mu_{1})(\lambda+\mu_{2})}{\mu_{2}}}{1-c}.
\end{align*}

As a result, when $\mu_{1} < \mu_{2}$ and $\mu_{2}^2 - \frac{\mu_{1}(\lambda+\mu_{1})(\lambda+\mu_{2})}{\mu_{2}} \geq 0$, the $2$ roots are negative and therefore in this regime our optimal values become $\lambda_{1}=0, \lambda_{2}=\lambda$. When $\mu_{1} < \mu_{2}$ and $\mu_{2}^2 - \frac{\mu_{1}(\lambda+\mu_{1})(\lambda+\mu_{2})}{\mu_{2}} \geq 0$, the positive root is the optimal rate which is equal to:
\begin{align*}
\lambda_{1}=\frac{-(\mu_{2}+c(\lambda+\mu_{1}))+\sqrt{\mu_{1}(\lambda+\mu_{2})(2+\frac{\mu_{2}}{\lambda+\mu_{1}}+\frac{\lambda+\mu_{1}}{\mu_{2}})}}{1- \frac{\mu_{1}(\lambda+\mu_{2})}{\mu_{2}(\lambda+\mu_{1})}}.
\end{align*}

Similarly by writing the $2-nd$ order polynomial for $\lambda_{2}$, we reach to the conclusion that when $\mu_{1}>\mu_{2}$ , if $\mu_{1}^2 \geq \frac{\mu_{2}(\lambda+\mu_{1})(\lambda+\mu_{2})}{\mu_{1}}$ the optimal rates are $\lambda_{1}=\lambda, \lambda_{2}=0$.
In the regime that $\mu_{1}>\mu_{2}$ and $\mu_{1}^2 < \frac{\mu_{2}(\lambda+\mu_{1})(\lambda+\mu_{2})}{\mu_{1}}$, the positive root is the optimal rate.
\begin{align*}
\lambda_{2}= \frac{-(\mu_{1}+\frac{(\lambda+\mu_{2})}{c})+\sqrt{\mu_{2}(\lambda+\mu_{1})(2+\frac{\mu_{1}}{\lambda+\mu_{2}}+\frac{\lambda+\mu_{2}}{\mu_{1}})}}{1- \frac{\mu_{2}(\lambda+\mu_{1})}{\mu_{1}(\lambda+\mu_{2})}}.
\end{align*}
\end{proof}
\vspace{0.1cm}
When $\mu_1=\mu_2$ the optimal rates that minimize AoI are ${\lambda_{1}}^{*}={\lambda_{2}}^{*}=\frac{\lambda}{2}$. As Figure \ref{optimal_lambda} illustrates, for $\mu_1=\mu_2=50$, optimal rates are ${\lambda_{1}}^{*}=\frac{\lambda}{2}$ and in the regimes that one of the service rates is much greater than the other one, AoI minimizes when all the updates are sent to the server with greater service rate.  

\begin{theorem}
Consider $m=1$ source and $n=3$ heterogeneous servers. The AoI is given by 
\begin{align}
\label{1_source_3_hetereg_servers}
AoI=\frac{1}{\sum_{i=1}^{3} \mu_{i}} + \frac{\mu_{1}\sum_{j=1}^{6} v_{j1}+\mu_{2}\sum_{j=1}^{6} v_{j2}+\mu_{3}\sum_{j=1}^{6} v_{j3}}{\sum_{i=1}^{3} \mu_{i}}
\end{align}
\end{theorem}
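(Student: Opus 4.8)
The plan is to mirror the argument used for the $n=2$ heterogeneous servers, now with the $3!=6$ discrete states needed to track the freshness ordering of the three physical servers. First I would fix the continuous state as $\mathbf{x}=[x_0,x_1,x_2,x_3]$, where $x_i$ is the age of the update held at physical server $i$ and $x_0$ is the age at the monitor, and take $\mathbf{b}_q=[1,1,1,1]$ for every state. I would then build the analogue of Table~\ref{table2}: from each of the six states there are exactly three arrival transitions (server $i$ receives a fresh update at rate $\lambda_i$, setting $x_i'=0$ and promoting server $i$ to freshest in the ordering) and three service transitions (server $i$ delivers at rate $\mu_i$, setting $x_0'=x_i$, giving server $i$ a fake update, and fake-preempting every staler server to $x_i$). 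The stationary distribution $\pi$ of the six-state freshness chain follows from its balance equations together with $\sum_q \pi_q=1$, as in the $n=2$ case.

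The crux is that, for the stated formula, the detailed ordering bookkeeping never has to be solved; only two structural facts about the transition maps $A_l$ matter. Every arrival transition leaves the monitor coordinate unchanged, so $(\mathbf{v}_{q}A_l)_0=v_{q0}$, and every service transition at physical server $i$ resets the monitor to that server's age, so $(\mathbf{v}_{q}A_l)_0=v_{qi}$. Both follow from the invariant $x_0=\max_i x_i$, which the fake-preemption rule maintains: a delivery from the stalest server (whose age equals $x_0$) is a no-op, while a delivery from a fresher server lowers $x_0$ and resets all staler servers, keeping $x_0$ equal to the running maximum. This is what makes $x_0'=x_i$ the correct age update even when server $i$ is not the freshest.

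With these facts in hand I would write the $0$-th coordinate of Lemma~\ref{lem:yates}'s equation \eqref{eq:yateslemma} for each state and sum over all six states. On the left this gives $(\sum_i\lambda_i+\sum_i\mu_i)\sum_{j=1}^{6}v_{j0}$, since every state has the same total outgoing rate. On the right, the forcing terms contribute $\sum_j b_{j0}\pi_j=\sum_j\pi_j=1$; the eighteen arrival transitions contribute $(\sum_i\lambda_i)\sum_j v_{j0}$, because each source state emits one arrival per server and each preserves the monitor coordinate; and the eighteen service transitions contribute $\sum_i\mu_i\sum_j v_{ji}$. The $\sum_i\lambda_i$ terms cancel, leaving $(\sum_i\mu_i)\sum_j v_{j0}=1+\sum_i\mu_i\sum_j v_{ji}$. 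Dividing by $\sum_i\mu_i$ and invoking $AoI=\sum_j v_{j0}$ from Lemma~\ref{lem:yates} yields \eqref{1_source_3_hetereg_servers}.

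The main obstacle is purely in the first step: correctly specifying the six outgoing transitions from each of the six states — in particular the fake-preemption action on the staler servers and the induced reindexing of the freshness order — and checking that the two structural facts above hold in every state. Once the table is correct, the summation telescopes immediately and the stated formula needs no large linear solve; notice that it uses $\pi$ only through $\sum_j\pi_j=1$. Obtaining a fully numerical $AoI$ would additionally require solving the $24$ linear equations for the $v_{ji}$ together with the six stationary probabilities, which is the messy but routine remainder I would relegate to the explicit expressions.
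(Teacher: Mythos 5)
Your proposal is correct and follows essentially the same route as the paper: the same six-state SHS tracking the freshness ordering of the three physical servers, the same transition structure, and the same final step of summing the $0$-th coordinate of \eqref{eq:yateslemma} over all six states to cancel the arrival terms and obtain \eqref{1_source_3_hetereg_servers}. (One small imprecision: the invariant maintained by fake preemption is $x_0 \geq \max_i x_i$ rather than equality---e.g., after all servers receive fresh arrivals before any delivery---but this does not affect the key fact that a delivery from server $i$ sets $x_0' = x_i$, so your argument stands.)
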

\begin{proof}
In this case, we'll have $6$ states and have to solve $24$ equations  to find $24$ unknowns. It seems quite troublesome to solve $24$ equations, however it seems to be quite straightforward. We have $18$ transitions for arrivals(we have $6$ states and $3$ servers) and $18$ transitions for arrivals at the monitor(depending on update coming from which server and which state it was before). For the sake of simplicity we define our states as $(1,2,3)$ equal to state $1$ which means update in server $1$ is the freshest update and update in server $2$ is the second freshet update. $(1,3,2)$ equal to state $2$ indicating that age of information in each server follows the order of $x_{1}\leq x_{3}\leq x_{2}$. $(2,1,3)$ equal to state $3$, $(2,3,1)$ state $4$, $(3,1,2)$ state $5$, and $(3,2,1)$ state $6$. After each transitions $\mathbf{x}$ changes to $\mathbf{x^\prime}$. When a new update arrives at server $i$ with rate $\lambda_{i}$, $x_{0}$ remains the same no matter which state we were at, $x_{i}$ becomes zero and the rest remains unchanged. For example if $i=1$, $x_{2}$ and $x_{3}$ remain unchanged. When an update arrives at the monitor from server $i$ with rate $\mu_{i}$, $x_{0}$ takes the value of age ($x_{i}$) of the received update. All the servers that have larger "age" will be preempted and on the vector $\mathbf{x^\prime}$ their value becomes equal to $x_{i}$ which is equivalent to consider a fake update inserted into to those servers after preemption. First we need to calculate our steady state probabilities. After writing down the equations and some simplifications, we will reach to the following equations:
\begin{align*}
\pi_{1} (\lambda_{2}+\lambda_{3}) = \lambda_{1} (\pi_{3}+\pi_{4}) \\
\pi_{2} (\lambda_{2}+\lambda_{3}) = \lambda_{1} (\pi_{5}+\pi_{6}) \\
\pi_{3} (\lambda_{1}+\lambda_{3}) = \lambda_{2} (\pi_{1}+\pi_{2}) \\
\pi_{4} (\lambda_{1}+\lambda_{3}) = \lambda_{2} (\pi_{5}+\pi_{6}) \\
\pi_{5} (\lambda_{1}+\lambda_{2}) = \lambda_{3} (\pi_{1}+\pi_{2}) \\
\pi_{6} (\lambda_{1}+\lambda_{2}) = \lambda_{3} (\pi_{3}+\pi_{4}) \\
\end{align*}

Knowing that $\sum_{i=1}^{6} \pi_{i} =1$, we can find all the steady states probabilities.
\begin{align*}
\pi_{1}=\frac{\lambda_{1}}{\lambda_{2}+\lambda_{3}} \frac{\lambda_{2}}{\lambda_{1}+\lambda_{2}+\lambda_{3}}, \\
\pi_{2}=\frac{\lambda_{1}}{\lambda_{2}+\lambda_{3}} \frac{\lambda_{3}}{\lambda_{1}+\lambda_{2}+\lambda_{3}}, \\
\pi_{3}=\frac{\lambda_{2}}{\lambda_{1}+\lambda_{3}} \frac{\lambda_{1}}{\lambda_{1}+\lambda_{2}+\lambda_{3}}, \\
\pi_{4}=\frac{\lambda_{2}}{\lambda_{1}+\lambda_{3}} \frac{\lambda_{3}}{\lambda_{1}+\lambda_{2}+\lambda_{3}}, \\
\pi_{5}=\frac{\lambda_{3}}{\lambda_{1}+\lambda_{2}} \frac{\lambda_{1}}{\lambda_{1}+\lambda_{2}+\lambda_{3}}, \\
\pi_{6}=\frac{\lambda_{3}}{\lambda_{1}+\lambda_{2}} \frac{\lambda_{2}}{\lambda_{1}+\lambda_{2}+\lambda_{3}}. \\
\end{align*}
We can easily find these $6$ parameter values.
\begin{align*}
v_{11}=\frac{\pi_{1}}{\lambda_{1}+\lambda_{2}+\lambda_{3}}, 	v_{21}= \frac{\pi_{2}}{\lambda_{1}+\lambda_{2}+\lambda_{3}} \\
v_{32}=\frac{\pi_{3}}{\lambda_{1}+\lambda_{2}+\lambda_{3}}, v_{42}=\frac{\pi_{4}}{\lambda_{1}+\lambda_{2}+\lambda_{3}} \\
v_{53}=\frac{\pi_{5}}{\lambda_{1}+\lambda_{2}+\lambda_{3}},
v_{63}=\frac{\pi_{6}}{\lambda_{1}+\lambda_{2}+\lambda_{3}}.
\end{align*}
\begin{align*}
v_{12}&=\frac{\pi_{1}+\lambda_{1}v_{32}+\lambda_{1}v_{42}+\mu_{1}v_{11}}{\lambda_{2}+\lambda_{3}+\mu_{1}}, \\
&= \pi_{1} (\frac{1}{\lambda_{1}+\lambda_{2}+\lambda_{3}} + \frac{1}{\lambda_{2}+\lambda_{3}+\mu_{1}})\\
v_{23}&=\frac{\pi_{2}+\lambda_{1}v_{53}+\lambda_{1}v_{63}+\mu_{1}v_{21}}{\lambda_{2}+\lambda_{3}+\mu_{1}}, \\
&= \pi_{2} (\frac{1}{\lambda_{1}+\lambda_{2}+\lambda_{3}} + \frac{1}{\lambda_{2}+\lambda_{3}+\mu_{1}})\\
v_{31}&=\frac{\pi_{3}+\lambda_{2}v_{11}+\lambda_{2}v_{21}+\mu_{2}v_{32}}{\lambda_{1}+\lambda_{3}+\mu_{2}}, \\
&= \pi_{3} (\frac{1}{\lambda_{1}+\lambda_{2}+\lambda_{3}} + \frac{1}{\lambda_{1}+\lambda_{3}+\mu_{2}})\\
v_{43}&=\frac{\pi_{4}+\lambda_{2}v_{53}+\lambda_{2}v_{63}+\mu_{2}v_{42}}{\lambda_{1}+\lambda_{3}+\mu_{2}}, \\
&= \pi_{4} (\frac{1}{\lambda_{1}+\lambda_{2}+\lambda_{3}} + \frac{1}{\lambda_{1}+\lambda_{3}+\mu_{2}})\\
v_{51}&=\frac{\pi_{5}+\lambda_{3}v_{11}+\lambda_{3}v_{21}+\mu_{3}v_{53}}{\lambda_{1}+\lambda_{2}+\mu_{3}}, \\
&= \pi_{5} (\frac{1}{\lambda_{1}+\lambda_{2}+\lambda_{3}} + \frac{1}{\lambda_{1}+\lambda_{2}+\mu_{3}})\\
v_{62}&=\frac{\pi_{6}+\lambda_{3}v_{32}+\lambda_{3}v_{42}+\mu_{3}v_{63}}{\lambda_{1}+\lambda_{2}+\mu_{3}}\\
&= \pi_{6} (\frac{1}{\lambda_{1}+\lambda_{2}+\lambda_{3}} + \frac{1}{\lambda_{1}+\lambda_{2}+\mu_{3}}).
\end{align*}
We calculated all the variables that are needed in this step for calculating these $6$ variables.
\begin{align*}
v_{13}=\frac{\pi_{1}+\lambda_{1}v_{33}+\lambda_{1}v_{43}+\mu_{1}v_{11}+\mu_{2}v_{12}}{\lambda_{2}+\lambda_{3}+\mu_{1}+\mu_{2}}, \\
v_{22}=\frac{\pi_{2}+\lambda_{1}v_{52}+\lambda_{1}v_{62}+\mu_{1}v_{21}+\mu_{3}v_{23}}{\lambda_{2}+\lambda_{3}+\mu_{1}+\mu_{3}},\\
v_{33}=\frac{\pi_{3}+\lambda_{2}v_{13}+\lambda_{2}v_{23}+\mu_{1}v_{31}+\mu_{2}v_{32}}{\lambda_{1}+\lambda_{3}+\mu_{1}+\mu_{2}}, \\
v_{41}=\frac{\pi_{4}+\lambda_{2}v_{51}+\lambda_{2}v_{61}+\mu_{2}v_{42}+\mu_{3}v_{43}}{\lambda_{1}+\lambda_{3}+\mu_{2}+\mu_{3}},\\
v_{52}=\frac{\pi_{5}+\lambda_{3}v_{12}+\lambda_{3}v_{22}+\mu_{1}v_{51}+\mu_{3}v_{53}}{\lambda_{1}+\lambda_{2}+\mu_{1}+\mu_{3}}, \\
v_{61}= \frac{\pi_{6}+\lambda_{3}v_{31}+\lambda_{3}v_{41}+\mu_{2}v_{62}+\mu_{3}v_{63}}{\lambda_{1}+\lambda_{2}+\mu_{2}+\mu_{3}}.
\end{align*}
Using the $12$ calculated variable before and also considering equations for $(v_{13},v_{33})$, $(v_{22},v_{52})$, and $(v_{41},v_{61})$ together we can calculate all these $6$ variables. AoI is equal to $\sum_{i=1}^{6} v_{i0}$. After 
writing equations for these $6$ variables and adding them together, we can finally calculate AoI.
\begin{align*}
AoI=\frac{1}{\sum_{i=1}^{3} \mu_{i}} + \frac{\mu_{1}\sum_{j=1}^{6} v_{j1}+\mu_{2}\sum_{j=1}^{6} v_{j2}+\mu_{3}\sum_{j=1}^{6} v_{j3}}{\sum_{i=1}^{3} \mu_{i}}
\end{align*}
\end{proof}

\section{Conclusion}
\label{conc}
In this paper, we studied the age of information in the presence of multiple independent servers monitoring several information sources. We derived AoI for the LCFS queue model using SHS analysis when we had a homogeneous network and a single source. We also provided the AoI formula for $m$ sources and $n=2,3$ servers in a homogeneous network. For a heterogeneous network, cases of $n=2,3$ servers when we have a single source were investigated and AoI formula was derived. From the simulation, it is observed that LCFS outperforms LCFS with preemption in waiting and FCFS for a homogeneous single information source network. Future directions include deriving explicit formula of AoI for multiple sources in a homogeneous and heterogeneous sensing networks where the update arrival rate and/or the service rate are different among the servers for any number of sources and servers.
\bibliographystyle{IEEEtran}
\bibliography{biblio} 

\IEEEtriggeratref{3}

\end{document}